\providecommand{\norm}[1]{{\lVert#1\rVert}}
\providecommand{\sqnorm}[1]{{\lVert#1\rVert^{2}_{2}}}
\begin{document}

\title{Joint Reconstruction of Multi-view Compressed Images 
\thanks{Part of this work has been accepted to the European Signal Processing Conference (EUSIPCO), Bucharest, Romania, Aug. 2012 \cite{Vijay_EUSIPCO2012}. 
 }}
 
\author{Vijayaraghavan~Thirumalai, ~\IEEEmembership{Student Member,~IEEE,} and 
            Pascal~Frossard, ~\IEEEmembership{Senior Member,~IEEE} 
            \thanks{The authors are with Signal Processing Laboratory - LTS4, Institute
of Electrical Engineering, Ecole Polytechnique F\'ed\'erale de Lausanne (EPFL), Lausanne 1015, Switzerland. e-mail: (vijayaraghavan.thirumalai@epfl.ch; pascal.frossard@epfl.ch).}
}

\maketitle

\begin{abstract}

The distributed representation of correlated multi-view images is an important problem that arise in vision sensor networks. This paper concentrates on the joint reconstruction problem where the distributively compressed correlated images are jointly decoded in order to improve the reconstruction quality of all the compressed images. We consider a scenario where the images captured at different viewpoints are encoded independently using common coding solutions (e.g., JPEG, H.264 intra) with a balanced rate distribution among different cameras. A central decoder first estimates the underlying correlation model from the independently compressed images which will be used for the joint signal recovery. The joint reconstruction is then cast as a constrained convex optimization problem that reconstructs total-variation (TV) smooth images that comply with the estimated correlation model. At the same time, we add constraints that force the reconstructed images to be consistent with their compressed versions. We show by experiments that the proposed joint reconstruction scheme outperforms independent reconstruction in terms of image quality, for a given target bit rate. In addition, the decoding performance of our proposed algorithm compares advantageously to state-of-the-art distributed coding schemes based on disparity learning and on the DISCOVER.

\end{abstract}

\begin{keywords}
 Distributed compression, Joint reconstruction, Optimization, Multi-view images, Depth estimation. 
\end{keywords}

\section{Introduction}
In recent years, vision sensor networks have been gaining an ever increasing popularity enforced by the availability of cheap semiconductor components. These systems usually acquire multiple correlated images of the same 3D scene from different viewpoints. Compression techniques shall exploit this correlation in order to efficiently represent the 3D scene information. The distributed coding paradigm becomes particularly attractive in such settings; it permits to efficiently exploit the correlation between images with low encoding complexity and minimal inter-sensor communication, which directly translate into
power savings in sensor networks. In the distributed compression framework, a central decoder jointly reconstructs the visual information from the compressed images by exploiting the correlation between the samples. This permits to achieve a good rate-distortion tradeoff in the representation of correlated multi-view images, even if the encoding is performed independently.

The first information-theoretic results on distributed source coding appeared in the late seventies for the
noiseless \cite{Slepian} and noisy cases \cite{WynerZiv}.  
 However, most results in distributed coding have remained non-constructive for about
three decades. Practical DSC schemes have then been designed by establishing a relation between the
Slepian-Wolf theorem and channel coding \cite{Pradhan}. Based on the results in \cite{Pradhan}, several distributed coding schemes for video and multi-view images have been proposed in the literature \cite{DVC_overview, DVC_overview_sp}. In such schemes, a feedback channel is generally used for accurately controlling the Slepian-Wolf coding rate. Unfortunately, this results in increased latency and bandwidth usage due to the multiple requests from the decoder. These schemes can thus hardly be used in real time applications.  One solution to avoid the feedback channel is to use a separate encoding rate control module to precisely control the Slepian-Wolf coding rate \cite{prism}.  The overall computational complexity at the encoder becomes non-negligible due to this rate control module. In this paper, we build a distributed coding scheme, where the correlated compressed images are directly transmitted to the joint decoder without implementing any Slepian-Wolf coding; this avoids the necessity for complex estimation of the statistical correlation estimation and of the coding rate at the encoder.

 \begin{figure*}[h!]
 \centerline{\epsfig{figure=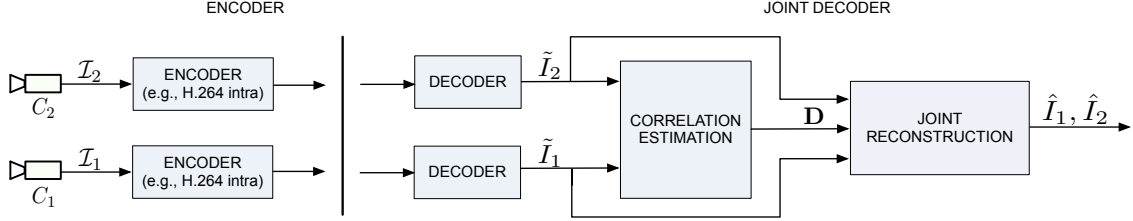,width=15.5cm}}
\caption{Schematic representation of our proposed framework. The images
$\mathcal{I}_1$ and $\mathcal{I}_2$ are correlated through displacement of scene
objects due to positioning of the cameras $C_1$ and $C_2$.}

\label{Fig:system}
\end{figure*}

We consider a scenario where a set of cameras are distributed in a 3D scene. In most practical deployments of such
systems, the images captured by the different cameras are likely to be correlated.
The captured images are encoded independently using standard encoding solutions and are transmitted to the central decoder. Here, we assume that the images are compressed using balanced rate allocation, which permits to share the transmission and computational costs equally among the sensors. It thus prevents the necessity for hierarchical relationship among the sensors. The central decoder builds a correlation model from the compressed images which is used to jointly decode the multi-view images. The joint reconstruction is formulated as a convex optimization problem. It reconstructs the multi-view images that are consistent with the underlying correlation information and with the compressed images information.  While reconstructing the images, we also effectively handle the occlusions that commonly arise in multi-view imaging. We solve the joint reconstruction problem using effective parallel proximal algorithms \cite{Combettes_prox}. 

We evaluate the performance of our novel joint decoding scheme in several multi-view datasets.  Experimental results demonstrate that the proposed distributed coding solution improves the rate-distortion performance of the separate coding results by taking advantage of the inter-view correlation. We show that the quality of the decoded images is quite balanced for a given bit rate, as expected from a symmetric coding solution. We observe that our scheme, at low bit rate, performs close to the joint encoding solutions based on H.264, when the block size used for motion compensation is set to $4\times4$.  Finally, we show that our framework outperforms state-of-the-art distributed coding solutions based on disparity learning \cite{David} and on the DISCOVER codec \cite{discover}, in terms  of rate-distortion performance. It certainly provides an interesting alternative to most classical DSC solutions \cite{DVC_overview, DVC_overview_sp,prism}, since it does not require any statistical correlation information at the encoder. 

Only very few works in the literature address the distributed compression problem without using a channel encoder or a feedback channel. In \cite{Wag}, a distributed coding technique for compressing the multi-view images has been proposed, where a joint decoder reconstructs the views from low resolution images using super-resolution techniques. In more details, each sensor transmits a low resolution compressed version of the original image to the decoder. At the decoder, these low resolution images are registered with respect to a reference image, where the image registration is performed by shape analysis and image warping. The registered low resolution images are then jointly processed to decode a high resolution image using image super-resolution techniques. However, this framework requires communication between the encoders in order to facilitate the registration, e.g., the transmission of feature points.  Other works in super-resolution use multiple compressed images that are fused for improved resolution \cite{sr_overview}. Such techniques usually target reconstruction of a \emph{single} high resolution image from multiple compressed images. Alternatively,  techniques have been developed in \cite{mdc, mdc1} to decode a \emph{single} high quality image from several encoded versions of the same source image or videos. This is achieved by solving an optimization problem that enforces the final reconstructed image to be consistent with all the compressed copies.  Our main target in this paper is to jointly improve the quality of \emph{multiple} compressed correlated (multi-view) images and not to increase the spatial resolution of the compressed images or to extract a single high quality image. 
More recently, Schenkel {\emph{et al.}} \cite{Schenkel} have considered a distributed representation of image pairs. In particular, they have proposed an optimization framework to enhance the quality of the JPEG compressed images. This work,  however, considered an asymmetric scenario that requires a reference image for joint decoding. 

The rest of the paper is organized as follows. The joint decoding algorithm along with the optimization framework for joint reconstruction is described in Section~\ref{sec:joint_decoder}. In Section~\ref{Sec:optmeth}, we present the optimization algorithm based on proximal splitting methods. In Section~\ref{sec:results},  we present the experimental results for the joint reconstruction of pairs of images. Section~\ref{sec:multiview} describes the extension of our proposed framework to decode multiple images along with the simulation results. Finally, in Section~\ref{sec:conc} we draw some concluding remarks.

\section{Joint Decoding of Image Pairs} \label{sec:joint_decoder}

 We consider the scenario illustrated in Fig.~\ref{Fig:system}, where a pair of cameras $C_1$ and $C_2$ project the 3D visual information on the 2D plane $\mathcal{I}_1$ and $\mathcal{I}_2$ (with resolution $N = N_1\times N_2$), respectively. The images $\mathcal{I}_1$ and $\mathcal{I}_2$ are compressed independently using standard encoding solutions (e.g., JPEG, H.264 intra) and are transmitted to a central decoder. The joint decoder has the access to the compressed version of the correlated images and its main objective is to improve the quality of all the compressed views by exploiting the underlying inter-view correlation.  We first propose to estimate the correlation between images from the decoded images $\tilde{I}_1$ and $\tilde{I}_2$, which is effectively modeled by a dense depth image $D$. 
The joint reconstruction stage then uses the depth information $D$ and enhances the quality of the decoded images $\tilde{I}_1$ and $\tilde{I}_2$. 
Note that one could solve a joint problem to estimate simultaneously the correlation information $D$ and the improved images. However, such a joint optimization problem would be hard to solve with a complex objective function. Therefore, we propose to split the problem in two steps: (i) we estimate a correlation information from the decoded images; and (ii) we carry out joint reconstruction using the estimated correlation information. These two steps are detailed in the rest of this section.

\subsection{Depth Estimation} \label{sec:depth_est}
The first task is to estimate the correlation between images, which typically consists in a depth image. In general, the dense depth information is estimated by matching the corresponding pixels between images. Several algorithms have been proposed in the literature to compute dense depth images. For more details, we refer the reader to \cite{Scharstein}. In this work, we estimate a dense depth image from the compressed images in a regularized energy minimization framework, where
the energy $E$ is composed of a data term $E_d$ and a smoothness term $E_s$. A dense depth image $D$ is
obtained by minimizing the energy function $E$ as
\begin{equation} \label{eqn:energy_chap4}
D= \underset{D_c}{\operatorname{argmin}} \; E(D_c) =  \underset{D_c}{\operatorname{argmin}} \;  \{E_d(D_c) + \lambda \; E_s(D_c)\},
\end{equation}
where $\lambda$ balances the importance of the data and smoothness terms, and  $D_c$ represents the candidate depth images. The candidate depth values $D_c(m,n)$ for every pixel position $(m,n)$ are discrete; this is constructed by uniformly sampling the inverse depth in the range $[1/D_{max}, 1/D_{min}]$, where $D_{min}$ and $D_{max}$ are the minimal and maximal depth values in the scene, respectively \cite{multiview_gc}.

We now discuss in more details the components of the energy function of Eq.~(\ref{eqn:energy_chap4}). The data term, $E_d$ is used to match the pixels across views by assuming that the 3D scene surfaces are Lambertian, i.e., the intensity is consistent irrespective of the viewpoints. It is computed as 
 \begin{equation} \label{eqn:datacost}
E_d(D_c) =  \sum_{m=1}^{N_1}  \sum_{n=1}^{N_2} \mathcal{C}((m,n),D_c(m,n)),
\end{equation}
where $N_1$ and $N_2$ represent the image dimensions and $(m,n)$ represent a pixel position. The most commonly used pixel-based cost function $\mathcal{C}$ includes squared intensity differences and absolute intensity differences. In this work, we use square intensity difference to measure the disagreement of assigning a depth value $D_c(m,n)$ to the pixel location $(m,n)$. Mathematically, it is computed as 
\begin{equation} \label{eqn:pixelcost}
\small
 \mathcal{C}((m,n),D_c(m,n)) =  \sqnorm{\tilde{I}_2(m,n)- \mathcal{W}(\tilde{I}_1(m,n),D_c(m,n))}, 
\end{equation}
\normalsize
where $\mathcal{W}$ is a warping function that warps the image $\tilde{I}_1$ using the depth value $D_c(m,n)$. This warping, in general, is a two step process~\cite{mv_geo}. First the pixel position $(m,n)$ in the image $\tilde{I}_1$ is projected to the world coordinate system. This projection step is represented as 
\begin{equation} \label{eqn:proj_step1}
[u, v, w]^T = R_1 P_1^{-1} [m,n,1]^T D_c(m,n) + T_1,
\end{equation} 
where $P_1$ is the intrinsic camera matrix of the camera $C_1$ and $(R_1, T_1)$ represent the extrinsic camera parameters with respect to the global coordinate system. Then, the 3D point $[u, v, w]^T$ is projected on the coordinates of the camera $C_2$ with the internal and external camera parameters, respectively as $P_2$ and $(R_2,T_2)$. This projection step can be described as 
\begin{equation}\label{eqn:proj_step2} 
[x^{\prime}, y^{\prime}, z^{\prime}]^T = P_2 R_2^{-1} \{[u,v,w]^T - T_2\}.
\end{equation}
Finally, the pixel location of the warped image is taken as $(m^\prime, n^\prime)=(round(x^{\prime}/z^{\prime}), (round(y^{\prime}/z^{\prime}))$, where $round(x)$ rounds $x$ to the 
nearest integer. 

The smoothness term, $E_s$ is used to enforce consistent depth values at neighboring pixel locations $(m,n)$ and $(\tilde{m}, \tilde{n})$. It is measured as 
\begin{equation}\label{eqn:chap4_energy}
  E_s(D_c) = \sum_{(m,n),(\tilde{m},\tilde{n})\in \mathcal{N}} min( |D_c(m,n)-D_c(\tilde{m}, \tilde{n})|, \tau),
\end{equation}
where $\mathcal{N}$ represents the usual four-pixel neighborhood and $\tau$ sets an upper level on the smoothness penalty such that discontinuities can be preserved \cite{Veksler}. 

We can finally rewrite the regularized energy objective function for the depth estimation problem as
\begin{equation} \label{eqn:depth_est}
\begin{aligned}
E(D_c) = & \sum_{m=1}^{N_1}  \sum_{n=1}^{N_2} \mathcal{C}((m,n),D_c(m,n)) +  \\
&  \lambda \sum_{(m,n),(\tilde{m}, \tilde{n}) \in \mathcal{N}} min( |D_c(m,n)-D_c(\tilde{m}, \tilde{n})|, \tau).
  \end{aligned}
\end{equation}
This cost function is used in the optimization problem of Eq.~(\ref{eqn:energy_chap4}), which is usually a non-convex problem. Several minimization algorithms exist in the literature to solve Eq.~(\ref{eqn:energy_chap4}), e.g., Simulated annealing \cite{Barnard}, Belief Propagation \cite{Belief_prop}, Graph Cuts \cite{Graph_cuts, Kolomogorov}. Among these solutions, the optimization techniques based on Graph Cuts compute the minimum energy in polynomial time and they generally give better results than the other techniques \cite{Scharstein}. Motivated by this, in our work, we solve the minimization problem of Eq.~(\ref{eqn:energy_chap4}) using Graph Cut techniques.

\subsection{Image Warping as Linear Transformation }\label{sec:warp_as_linear}
Before describing our joint reconstruction problem, we show how the image warping operation ${\mathcal{W}(\tilde{I}_1,D)}$ in Eq.~(\ref{eqn:pixelcost}) can be written as matrix multiplication of the form $A\cdot \mathcal{R}(\tilde{I}_1)$\footnote{For consistency, we use the compressed image $\tilde{I}_1$; however, this matrix multiplication holds even if one uses the original image $\mathcal{I}_1$ for warping.}; this linear representation offers a more flexible formulation of our joint reconstruction problem. The reshaping operator ${\mathcal{R}}: I_{N_1\times N_2}\rightarrow X_{N_1N_2 \times 1}$ produces a vector $X = \mathcal{R}(I) = [ I_{.,1} ^T \; I_{.,2} ^T \ldots  I_{.,{N_1}}^T] ^T$ from the matrix $I$, where $I_{.,m}$ represents the $m^{th}$ row of the matrix $I$ and $(.)^T$ denotes the usual transpose operator. For our convenience, we also define another operator $\mathcal{R}^{-1}_{N_1\times N_2}: X_{N_1N_2 \times 1} \rightarrow I_{N_1\times N_2}$ that takes the vector $X = [\mathcal{R}(I)]_{N_1N_2 \times 1}$ and gives back the matrix $I_{N_1\times N_2}$, i.e., this operator $\mathcal{R}^{-1}$ performs the inverse operations corresponding to $\mathcal{R}$. 
The matrix $A$ describes the warping by re-arranging the elements of $\mathcal{R}(\tilde{I}_1)$. Its construction is described in this section.

 We have shown earlier that the warping function $\mathcal{W}$ shifts the pixel position $(m,n)$ in the reference image to the position $(m^\prime, n^\prime)$ in the target image. Alternatively, this pixel shift between images can be represented using a horizontal component $\bold{m}^h$ and a vertical component $\bold{m}^v$ of the motion field as $(m^\prime, n^\prime) = (m+ \bold{m}^h(m,n), n+\bold{m}^v(m,n))$. Note that this motion field $(\bold{m}^h,\bold{m}^v)$ can be easily computed from Eqs.~(\ref{eqn:proj_step1}) and (\ref{eqn:proj_step2}), once the depth information $D$ and the camera parameters are known. Now, our goal is to represent the motion compensation operation ${\tilde{I}_1(m+\bold{m}^h(m,n),n+\bold{m}^v(m,n))}$ as a linear transformation $A\cdot \mathcal{R}(\tilde{I}_1)$ given as 
\begin{eqnarray}\label{eqn:I2AI1}
\underbrace{\left[ \begin{array}{c}
                 \bar{I}_{2,1}^T  \\
                 \bar{I}_{2,2}^T  \\
                \vdots \\
                 \bar{I}_{2,{N_1}}^T  \end{array} \right] }_{\mathcal{R}(\bar{I}_2)}
                  =  \underbrace{\left[ \begin{array}{c}
                 A^1 \\
                 A^2  \\
                \vdots \\
                 A^{N_1} \end{array} \right] }_{A}
                 \underbrace{ \left[ \begin{array}{c}
                 \tilde{I}_{1,1} ^T \\
                 \tilde{I}_{1,2} ^T  \\
                \vdots \\
                 \tilde{I}_{1,{N_1}}^T  \end{array} \right]}_{\mathcal{R}(\tilde{I}_1)}.
                 \end{eqnarray}
Here, ${\bar{I}_2 = \mathcal{W}(\tilde{I}_1(m,n),D)}$ represents the warped image and $A^m$ is a matrix of dimensions $N_2 \times N_1N_2$ whose entries  are determined by the horizontal and vertical components of the motion field in the $m^{th}$ row, i.e., $\bold{m}^h(m,.)$ and $\bold{m}^v(m,.)$. 

In general, the elements of the matrix $A^m$ can be found in two ways: (i)  forward warping; and (ii) inverse warping. In this work, we propose to construct the matrix $A^m$ based on forward warping; this permits easier handling of the occluded pixels as shown later. Given a motion vector, the elements of the matrix $A^m$ are given by
\begin{equation}\label{eqn:matA}
A^m(n-\beta_1-\beta_2N_2, n) =    \left\{
                              \begin{array}{ll}
                             1 & \mbox{if \ } \bold{m}^h(m,n) = \beta_1,\\
                                  & \mbox{and } \bold{m}^v(m,n) = \beta_2, \\
                             0 &  \mbox{otherwise.} \end{array} \right.
\end{equation}
If $n-\beta_1-\beta_2N_2 < 0$ (e.g., at image boundaries), we set ${n-\beta_1-\beta_2N_2 = 1}$ so that the dimensions of the matrix $A^m$ stays $N_2 \times N_1N_2$. It should be noted that the matrix $A^m$ formed using Eq.~(\ref{eqn:matA}) contains multiple entries with values of `$1$' in each row. This is because several pixels in the source image can be mapped to the same location in the destination image during forward warping. In such cases, for a given row index $m$ we keep only the last `$1$' entry in the matrix $A^m$ while the remaining ones in the row are set to zero. This is motivated by the fact that, during forward warping when multiple source pixels are mapped to the same destination point $(m^\prime, n^\prime)$, the intensity value of the last source pixel is assigned to the destination pixel $(m^\prime, n^\prime)$ \footnote{We assume that the pixels are scanned from left to right and then top to bottom.}. 
Furthermore, it is interesting to note that some of the rows in the matrix $A^m$ do not contain any entry with value of `$1$', i.e., all entries in $m^{th}$ row of $A^m$ are zeros. This means that the set of pixel locations $\{j: j \in \mathcal{J}^m \}$ in the warped image $\bar{I}_{2,m}(j)$ has zero value, where $\mathcal{J}^m$ is the set of row indexes in the matrix $A^m$ that do not contain any entry with value of  `$1$'. These pixel positions represent holes in the warped image that define the occluded regions.  Finally, the $m^{th}$ row in the warped image is represented as 
\begin{equation}\label{}
\bar{I}_{2,m}(j) =    \left\{
                              \begin{array}{ll}
                             0 & \mbox{if \ } j \in \mathcal{J}^m \\
                            \tilde{I}_{1}(k, n) &  \mbox{if \ } A^m(j,(k-m)N_2+n) =1. \end{array} \right.
\end{equation}
Thus, it is clear that the matrix $A^m$ shifts the pixels in $\tilde{I}_{1}$ by the corresponding motion vector $(\bold{m}^h(m,.), \bold{m}^v(m,.))$ in order to form $\bar{I}_{2,m}$. In a similar way, we can construct the matrix $A^m,  \forall m \in \{1,2,\ldots,N_1 \}$, and thus we can represent the image warping ${\mathcal{W}(\tilde{I}_1(m,n),D)}$ as $A\cdot \mathcal{R}(\tilde{I}_1)$. Finally, note that similar operations can also be performed with an inverse mapping. For details related to the construction of the matrix $A^m$ based on inverse warping, we refer the reader to \cite[Ch.\ 6, p.\ 95]{Vijay_thesis}.

\subsection{Joint reconstruction} \label{sec:jr}
We now discuss our novel joint reconstruction algorithm that takes benefit of the estimated correlation information given by the matrix $A$ (or $D$) in order to reconstruct the images. We propose to reconstruct an image pair $(\hat{I}_1, \hat{I}_2)$ as a solution to the following optimization problem:
\begin{align} \label{eqn:jr}
(\hat{I}_1, \hat{I}_2) = \;  \underset{I_1, I_2 \in \mathbb{R}^{N_1\times N_2} }
{\operatorname{argmin}}  \;
& (\norm{I_1}_{TV} +  \norm{I_2}_{TV})  \\ \nonumber
  \mbox{s.t.} \;
 &  \norm{\mathcal{R}(I_1) -  \mathcal{R}(\tilde{I}_1)}_2 \leq  \epsilon_1,\\ \nonumber
 &  \norm{\mathcal{R}(I_2) - \mathcal{R}(\tilde{I}_2)}_2 \leq  \epsilon_1, \\ \nonumber
 &  \norm{\mathcal{R}(I_2) -  A\cdot \mathcal{R}(I_1)}_2^2 \leq \epsilon_2.
\end{align}
Here,  $\tilde{I}_1$ and $\tilde{I}_2$ represent the decoded views (see Fig.~\ref{Fig:system}) and $\norm{.}_{TV}$ represents the total-variation (TV) norm. The first two constraints of Eq.~(\ref{eqn:jr}) forces the reconstructed images $\hat{I}_1$ and $\hat{I}_2$ to be close to the respective decoded images $\tilde{I}_1$ and $\tilde{I}_2$. The last constraint encourages the reconstructed images to be consistent with the correlation information represented by $A$, i.e., the warped image $A\cdot \mathcal{R}(I_1)$ should be consistent with the image $\mathcal{R}(I_2)$. 
Finally, the TV prior term ensures that the reconstructed images $\hat{I}_1$ and $\hat{I}_2$ are smooth. In general, inclusion of the prior knowledge brings effective reduction in the search space, which leads to efficient optimization solutions. 
The optimization problem of Eq.~(\ref{eqn:jr}), therefore reconstructs a pair of TV smooth images that is consistent with both the compressed images and the correlation information. In our framework, we use the TV prior on the reconstructed images, however one could also use a sparsity prior that minimizes the $l_1$ norm of the coefficients in a sparse representation of the images \cite{Donoho, Candes}

In the above formulation, it is clear that we measure the correlation consistency of all the pixels in the image $\mathcal{R}(I_2)$ and the warped image $A\cdot\mathcal{R}(I_1)$. However, this assumption is not true in multi-view imaging scenarios, as there are often problems due to occlusions. This indicates that we need to consider only the pixels that appear in both the views and we need to ignore the holes in the warped image $A \cdot \mathcal{R}(I_1)$ while enforcing consistency between $\mathcal{R}(I_2)$ and $A\cdot \mathcal{R}(I_1)$. 
The positions of holes in the warped image $A\cdot\mathcal{R}(I_1)$ correspond to the row indexes in the matrix $A$ that do not contain any value of  `$1$', i.e., all entries in a given row are zero. Once these rows are identified, we simply ignore that contribution while we measure the correlation consistency between the images $\mathcal{R}(I_2)$ and $A\cdot \mathcal{R}(I_1)$. More formally, let $\mathcal{J} = \bigcup_{m=1}^{N_1} \mathcal{J}^m$ be the set of indexes of these rows. Let us denote a diagonal matrix $M$ that is formed as 
\begin{equation} \label{eqn:matrix_M}
M(j,j) = \left\{
               \begin{array}{ll}
                 0 & \mbox{if \ } j \in \mathcal{J} \\
                 1  &  \mbox{otherwise,} \end{array} \right.
\end{equation}
where $j = \{1,2,\ldots,N_1N_2\}$. For effective occlusion handling, the joint reconstruction problem of Eq.~(\ref{eqn:jr}) can be modified as 
\begin{align}  \tag{OPT-1}\label{eqn:jr_f}
(\hat{I}_1, \hat{I}_2) = \;
 \underset{I_1, I_2}{\operatorname{argmin}}  \; &
(\norm{I_1}_{TV} +  \norm{I_2}_{TV})  \\ \nonumber
  \mbox{s.t.} \;  
 & \norm{\mathcal{R}(I_1) -  \mathcal{R}(\tilde{I}_1)}_2 \leq  \epsilon_1, \\ \nonumber
 &  \norm{\mathcal{R}(I_2) - \mathcal{R}(\tilde{I}_2)}_2 \leq  \epsilon_1, \\ \nonumber
 &  \norm{M(\mathcal{R}(I_2) -  A\cdot\mathcal{R}(I_1))}_2^2 \leq \epsilon_2.
\end{align}
Note that, by setting $M = \mathbbm{1}$, we get the optimization problem of Eq.~(\ref{eqn:jr}) that considers the consistency of all the pixels in $\mathcal{R}(I_2)$ and $A\cdot \mathcal{R}(I_1)$. We show later that the quality of the reconstructed images are improved,  when our joint decoding problem OPT-1 is solved with the matrix $M$ constructed using Eq.~(\ref{eqn:matrix_M}). Finally, the depth estimation and the joint reconstruction steps could be iterated several times. In our experiments, however, we have not observed any  significant improvement in the quality of the reconstructed images by repeating these two steps. 

\section{Optimization methodology} \label{Sec:optmeth}

We propose now a solution for the joint reconstruction problem \ref{eqn:jr_f}. We first show that the optimization problem is convex. Then, we propose an effective solution based on proximal methods. 

\newtheorem{prop}{Proposition} 
\begin{prop} \label{prop:jr_cvx}
The OPT-1 optimization problem is convex.
\end{prop}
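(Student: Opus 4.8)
The plan is to invoke the standard criterion that an optimization problem is convex whenever it minimizes a convex objective over a convex feasible set, and then to verify these two ingredients separately for \ref{eqn:jr_f}. First I would argue that the objective $\norm{I_1}_{TV} + \norm{I_2}_{TV}$ is convex. Each total-variation term is a seminorm that can be written as the norm of the output of a linear discrete-gradient operator applied to the image, and any norm composed with a linear map is convex. Since the sum of convex functions is convex, the objective is convex in the pair $(I_1, I_2)$.

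Next I would show that each of the three constraints defines a convex subset of $\mathbb{R}^{N_1\times N_2}\times\mathbb{R}^{N_1\times N_2}$. The key observation is that $\mathcal{R}$, $A$, and $M$ are all linear operators, so the quantities appearing inside the norms, namely $\mathcal{R}(I_1)-\mathcal{R}(\tilde{I}_1)$, $\mathcal{R}(I_2)-\mathcal{R}(\tilde{I}_2)$, and $M(\mathcal{R}(I_2)-A\cdot\mathcal{R}(I_1))$, are affine functions of the optimization variables. For the first two constraints, the map $(I_1,I_2)\mapsto \norm{\mathcal{R}(I_i)-\mathcal{R}(\tilde{I}_i)}_2$ is the composition of the convex Euclidean norm with an affine map, hence convex, and its sublevel set $\{\,\cdot \leq \epsilon_1\,\}$ is therefore convex. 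For the third constraint I would note that the squared Euclidean norm $\sqnorm{\cdot}$ is convex, so its composition with the affine map $(I_1,I_2)\mapsto M(\mathcal{R}(I_2)-A\cdot\mathcal{R}(I_1))$ is again convex, and $\{\,\cdot \leq \epsilon_2\,\}$ is a convex sublevel set.

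Finally, the feasible set is the intersection of these three convex sets and is thus convex; minimizing the convex objective over it makes \ref{eqn:jr_f} a convex program. I do not expect any genuine obstacle here, since every piece reduces to two elementary facts: norms and squared norms are convex, and convexity is preserved under composition with affine maps and under intersection of sublevel sets. The only point deserving care is justifying that the TV term is convex even though it is merely a seminorm, vanishing on constant images; writing it explicitly as $\norm{\nabla I}$ for the appropriate linear gradient operator $\nabla$ removes any ambiguity and makes the convexity immediate.
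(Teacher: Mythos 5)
Your proof is correct, and for the one step the paper actually works out in detail it takes a genuinely different route. The paper disposes of the TV terms and the first two constraints exactly as you do (norms composed with affine maps, citing standard convexity facts), but for the third constraint it expands $g(\grave{I}_1,\grave{I}_2)=\sqnorm{\grave{I}_2-\grave{A}\grave{I}_1}$ as a quadratic form and verifies convexity by computing the block Hessian and factoring it as $2C^{T}C\succeq 0$ with $C=[-\grave{A}\;\;\mathbbm{1}]$. You instead observe that $(I_1,I_2)\mapsto M(\mathcal{R}(I_2)-A\cdot\mathcal{R}(I_1))$ is affine and that the squared Euclidean norm composed with an affine map is convex, so the sublevel set is convex. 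Your argument is shorter, needs no matrix calculus, and transfers verbatim to the multi-view problem OPT-2 (where the correlation constraint is a sum of such terms) and to any other convex penalty in place of $\sqnorm{\cdot}$; the paper's computation is more self-contained and makes positive semidefiniteness concrete through the explicit factorization, at the cost of some bookkeeping (its displayed Hessian in fact has the blocks $\grave{A}\grave{A}^{T}$ and $\grave{A}^{T}\grave{A}$ transposed relative to the correct second derivatives, a slip that your composition argument sidesteps entirely). Your closing remark that the TV term is only a seminorm, written as $\norm{\nabla I}$ for a linear gradient operator, is a useful precision the paper does not make explicit.
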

\begin{proof}
Our objective is to show that all the functions in \ref{eqn:jr_f} problem are convex. However, it is quite easy to check that the functions $\norm{I_j}_{TV}$ and $\norm{\mathcal{R}(I_j) -  \mathcal{R}(\tilde{I}_j)}_2$, $\forall j \in \{1,2\}$ are convex \cite{Boyd_CVX}. So, we have to show that the last constraint $\norm{M(\mathcal{R}(I_2) - A\cdot\mathcal{R}(I_1))}_2^2$ is a convex function.  

Let $g(\grave{I}_1,\grave{I}_2) =  \sqnorm{\grave{I}_2 - \grave{A} \grave{I}_1}$, where $\grave{I}_2 = M\cdot \mathcal{R}(I_2)$, $\grave{A} = MA$ and $\grave{I}_1 = \mathcal{R}(I_1)$. 
The function $g$ can be represented as  
\begin{eqnarray} \nonumber 
g(\grave{I}_1,\grave{I}_2)   & = &(\grave{I}_2-\grave{A}\grave{I}_1)^T (\grave{I}_2-\grave{A}\grave{I}_1) \\  \nonumber 
              & = & {\grave{I}_2^T\grave{I}_2- \grave{I}_2^T\grave{A}\grave{I}_1 - \grave{I}_1^T\grave{A}^T\grave{I}_2 + \grave{I}_1^T\grave{A}^T\grave{A}\grave{I}_1}.              
\end{eqnarray}
The second derivative $\nabla^2 g$ of the function $g$ is given as  
\begin{eqnarray} \nonumber 
  \nabla^2 g =  \left[   \begin{array}{cc}
                                     2\grave{A}\grave{A}^T & -2\grave{A} \\
                                     -2\grave{A}^T  & 2 \\
                                     \end{array}                                 
                                      \right]   
                 =  2C^TC   
                 \succeq 0.
\end{eqnarray}
Here, $C = [\grave{A}^{T} \; -\mathbbm{1}]$, where $\mathbbm{1}$ represents the identity matrix and $2C^{T}C \succeq 0$ follows from $2x^{T}C^{T}Cx$ = $2 \norm{Cx}_2^2 \geq 0$ for any $x$. This means that the Hessian function $\nabla^2 g$ is positive semi-definite and thus $g(\grave{I}_1,\grave{I}_2)$ is convex. 
\end{proof}

We now propose an optimization methodology to solve \ref{eqn:jr_f} convex problem with proximal splitting methods \cite{Combettes_prox}.  For mathematical convenience, we rewrite \ref{eqn:jr_f} as 
\begin{equation} \label{eqn:jr_mod_chap4}
\begin{aligned}
& \underset{X \in \mathbb{R}^{2N} }{\operatorname{argmin}}  
& &  \{ \norm{\mathcal{R}^{-1}(S_1X)}_{TV} +  \norm{\mathcal{R}^{-1}(S_2X)}_{TV}\} \\
& \mbox{s.t.} 
& & {\norm{S_1(Y - X)}_2 \leq  \epsilon_1},  {\norm{S_2(Y - X)}_2 \leq  \epsilon_1}, \\
&&&{\norm{BX}_2^2 \leq \epsilon_2},
\end{aligned}
\end{equation}
where $X = [  \mathcal{R}(I_1) \; ;   \mathcal{R}(I_2) ] $, $Y= [   \mathcal{R}(\tilde{I}_1) \; ;   \mathcal{R}(\tilde{I}_2) ] $, $S_1=[ \mathbbm{1} \; 0]$, $S_2=[ 0 \; \mathbbm{1}]$, 
$B=[  -MA \;  M ] $ and $\mathbbm{1}$ represents the identity matrix. Recall that $\mathcal{R}^{-1}_{N_1\times N_2}$ (for simplicity we omit the subscript in Eq.~(\ref{eqn:jr_mod_chap4})) is the operator that outputs a matrix of dimensions $N_1\times N_2$ from a column vector of dimensions $N = N_1N_2$. The optimization problem of Eq.~(\ref{eqn:jr_mod_chap4}) can be visualized as a special case of general convex problem as 
\begin{equation} \label{eqn:convex_opt_chap4}
\underset{X \in \mathbb{R}^{2N}}{\operatorname{argmin}} \{ f_1(X) + f_2(X) + f_3(X) + f_4(X) +  f_5(X)\},
\end{equation}
where the functions $f_1, f_2, f_3, f_4, f_5 \in \Gamma_0(\mathbb{R}^{2N})$ \cite{Combettes_prox}.  $\Gamma_0(\mathbb{R}^{2N})$ is the class of lower semicontinuous convex functions from $\mathbb{R}^{2N}$ to $(-\infty~+\infty]$ that are not infinity everywhere.  For the optimization problem given in Eq.~(\ref{eqn:jr_mod_chap4}), the functions in the representation of Eq.~(\ref{eqn:convex_opt_chap4}) are 
\begin{enumerate} 
\item{$f_1(X) =\norm{\mathcal{R}^{-1}(S_1X)}_{TV}$},
\item{ $f_2(X) =\norm{\mathcal{R}^{-1}(S_2 X)}_{TV}$},
\item{$ f_3(X) = i_{c_1}(X) = \left\{ \begin{array}{ll}  0 & X \in {c_1} \\ \infty & \mbox{otherwise,} \end{array} \right.
 $
\\ i.e., $f_3(X)$ is the indicator function of the closed convex set ${c_1} = \{ X  : \norm{S_1(Y-X)}_2 \leq  \epsilon_1 \},$ }
\item{$ f_4(X) = i_{c_2}(X) = \left\{ \begin{array}{ll}  0 & X \in {c_2} \\ \infty & \mbox{otherwise,} \end{array} \right.
 $
\\ where ${c_2} = \{ X  : \norm{S_2(Y-X)}_2 \leq  \epsilon_1 \},$ }
\item{ $f_5(X) = i_{c_3}(X) = \left\{ \begin{array}{ll} 0 & X \in {c_3} \\ \infty & \mbox{otherwise,} \end{array} \right. $
 \\ where $ {c_3} = \{ X  : \sqnorm{BX}  \leq \epsilon_2\}.$}     
\end{enumerate}

The solution to the problem of Eq.~(\ref{eqn:convex_opt_chap4}) can be estimated by generating the recursive sequence $X^{(t+1)} = prox_{f}(X^{(t)})$, where  the function $f$ is given as $f = \sum_{i=1}^5 f_i$. The proximity operator is defined as the $prox_f(X) = min_{X}~ \{ f (X)  + \frac{1}{2}\sqnorm{X-Z} \}$.  The main difficulty with these iterations is the computation of the $prox_{f}(X)$ operator. There is no closed form expression to compute the $prox_{f}(X)$, especially when the function $f$ is the cumulative sum of two or more functions.  In such cases, instead of computing the $prox_{f}(X)$ directly for the combined function $f$, one can perform a sequence of calculations involving separately the individual operators $prox_{f_i}(X), \forall i \in \{1,\ldots,5 \}$. The algorithms in this class are known as \emph{splitting methods} \cite{Combettes_prox}, which lead to an easily implementable algorithm. 

We describe in more details the methodology to compute the \emph{prox} for the functions $f_i, \forall i \in \{1,\ldots,5 \}$. 
For the function $f_1(X) = \norm{\mathcal{R}^{-1}(S_1X)}_{TV}$, the $prox_{f_1}(X)$ can be computed using Chambolle's algorithm \cite{Chambolle}. A similar approach can be used to compute the $prox_{f_2}(X)$.  The function $f_3$ can be represented  as $f_3 = F\circ G$, where $F = i_{d(\epsilon_1)}$ and $G = S_1X - S_1Y $. The set $d(\epsilon_1)$ represents the $l_2$-ball defined as $d(\epsilon_1) = \{ y  \in \mathbb{R}^{2N} : \norm{y}_2 \leq \epsilon_1 \} $. Then, the $prox_{f_3}$ can be computed using the following closed form expression:
\begin{equation} \label{eqn:chap4_proxf3}  
prox_{f_3}(X) = prox_{F\circ G}(X) = X + (S_1)^*(prox_{F}-\mathbbm{1})(G(X))
\end{equation}
\cite{Fadili_ICIP}, where $(S_1)^*$ represents the conjugate transpose of $S_1$. The $prox_{F}(y)$ with $F = i_{d(\epsilon_1)}$ can be computed using radial projection \cite{Combettes_prox} as  
\begin{equation} \label{eqn:chap4_l2ball}
prox_F(y) = \left\{ \begin{array}{ll}  y & \norm{y}_2 \leq \epsilon_1 \\ \frac{y}{\norm{y}_2} & \mbox{otherwise.} \end{array} \right.
\end{equation}
The $prox$ for the function $f_4$ can also be solved using Eq.~(\ref{eqn:chap4_proxf3}) by setting $F = i_{d(\epsilon_1)}$ and $G = S_2X- S_2Y $. Finally, the function $f_5$ can be represented with $F = i_{d(\sqrt{\epsilon_2})}$ and an affine operator $G_1 = BX $, i.e., $f_5 = F\circ G_1$. As the operator $B$ is not a tight frame, the $prox_{f_5}$ can be computed using an iterative scheme \cite{Fadili_ICIP}. Let $\mu_t \in ( 0, 2/\gamma_2 )$, and $\gamma_1$ and $\gamma_2$ be the frame constants with $\gamma_1 \mathbbm{1}  \leq B B^* \leq \gamma_2  \mathbbm{1}$. The $prox_{f_5}$ can be calculated iteratively \cite{Fadili_ICIP} as 
\begin{eqnarray}  \label{eqn:chap4_proxf4_step1}
u^{(t+1)} &=& \mu_t ( \mathbbm{1} - prox_{\mu_t^{-1}F})(\mu_t^{-1}u^{(t)} + G_1p^{(t)} ) \\ \label{eqn:chap4_proxf4_step2}
p^{(t+1)} &=& X -B^* u ^{(t+1)},
\end{eqnarray}
where $u^{(t)} \rightarrow u$ and $p^{(t)} \rightarrow prox_{F\circ G} = prox_{f_5} = X-B^*u$. It has been shown that both $u^{(t)}$ and $p^{(t)}$ converge linearly and the best convergence rate is attained when ${\mu_t = 2/(\gamma_1+\gamma_2)}$.

In our work, we use the  parallel proximal algorithm (PPXA) proposed by Combettes \emph{et al.} \cite{Combettes_prox} to solve Eq.~(\ref{eqn:convex_opt_chap4}), as this algorithm can be easily implementable on multicore architectures due to its parallel structure. The PPXA  algorithm starts with an initial solution $X^{(0)}$ and computes the $prox_{f_i}, \forall i \in \{1,\ldots, 5\}$ in each iteration and the results are used to update the current solution $X^{(0)}$. The iterative procedure for computing the \emph{prox} of functions $f_i,  \forall i \in \{1,\ldots, 5\}$, and the updating steps are repeated until convergence is reached. The authors have shown that the sequence $(X^{(t)})_{t\geq 1}$ generated by the PPXA algorithm is guaranteed to converge to the solution of problems such as the one given in  Eq.~(\ref{eqn:convex_opt_chap4}). \\

\section{Experimental Results} \label{sec:results}

\subsection{Setup}
   
We study now the performance of our distributed representation scheme for the joint reconstruction of pairs of compressed images. The experiments are carried out in six natural datasets namely, 
\emph{Tsukuba} (views center and right), \emph{Venus} (views 2 and 6) \cite{Scharstein}, \emph{Plastic} (views 1 and 2) \cite{scharstein_plastic}, \emph{Flowergarden} (frames 5 and 6), \emph{Breakdancers} (views 0 and 2) and \emph{Ballet} (views 3 and 4) \cite{MSR_sequence}. The first four datasets have been captured by a camera array where the different viewpoints correspond
to translating the camera along one of the image coordinate axis. In such a scenario, the motion of objects
due to the viewpoint change is restricted to the horizontal direction with no motion along the vertical direction. The depth estimation is thus a one-dimensional search problem and the data cost function given in Eq.~(\ref{eqn:datacost}) is modified accordingly. 

We compress the images independently using an H.264 intra coding scheme; this implementation is carried out using the JM reference software version 18 \cite{JM_software}. The bit rate at the encoder is varied by changing the quantization parameter (QP) in the H.264 coding scheme. In our experiments, we use six different QP parameters, namely $51, 48, 45, 42, 39\; \mbox{and}\; 35$ in order to generate the rate-distortion (RD) plots. Also, we use the same  QP value while encoding the images $\mathcal{I}_1$ and $\mathcal{I}_2$, in order to ensure balanced rate allocation among different cameras. We estimate a depth image from the decoded images $\tilde{I}_1$ and $\tilde{I}_2$ by solving the regularized energy minimization problem of Eq.~(\ref{eqn:energy_chap4}) using $\alpha$-expansion algorithm in Graph Cuts \cite{Graph_cuts}. Unless stated explicitly, we solve the OPT-1 optimization problem with matrix $M$ constructed using Eq.~(\ref{eqn:matrix_M}). The smoothness parameters $(\lambda, \tau)$ of the depth estimation problem of Eq.~(\ref{eqn:depth_est}) and the $(\epsilon_1, \epsilon_2)$ parameters of the OPT-1 joint reconstruction problem are given in Table~\ref{table:parameters} for all the six datasets; these parameters are selected based on trial and error experiments. The solution to the OPT-1 problem is computed by running the PPXA algorithm for 100 iterations. 

We report in this section the performance of the proposed joint reconstruction scheme and highlight the benefit of 
exploiting the inter-view correlation while decoding the images. 
We then study the effect of compression on the quality of the estimated depth images.  
Then, we analyze the importance of the matrix $M$ that enforces correlation consistency only on the corresponding pixels (i.e., the pixels that are not occluded) 
on the quality of the reconstructed images.  Finally, we compare the rate-distortion performance of our scheme w.r.t. state-of-the-art
distributed coding solutions and joint encoding algorithms.

\begin{table}[h!]
\caption{The parameters $(\lambda, \tau)$ in \protect{Eq.~(\ref{eqn:depth_est})} and $(\epsilon_1, \epsilon_2)$ in the OPT-1 problem used in our experiments.} 
\label{table:parameters}
\centering
\begin{tabular}{|c|c|c|c|c|} \hline
{ Dataset} & {$\lambda$}  &{$\tau$}  &{$\epsilon_1$}  & {$\epsilon_2$}  \\ \hline 

Tsukuba      &190  &4  &3  &2\\ \hline

Venus     &220  & 4  &1   &2\\ \hline

Plastic      &120  &4  &1  &2 \\ \hline

Flowergarden   &170  &3  &1  &1.25 \\ \hline

Breakdancer   &300  &160   &2  &1\\ \hline

Ballet   &290  &160   &1 &2.2\\ \hline

\end{tabular}
\end{table}

\begin{figure*}[h!]
\centering
$\begin{array}{c@{\hspace{0.1 in}}c} \multicolumn{1}{l}{\mbox{}} &  \multicolumn{1}{l}{\mbox{}} \\
  \epsfxsize=3in \epsffile{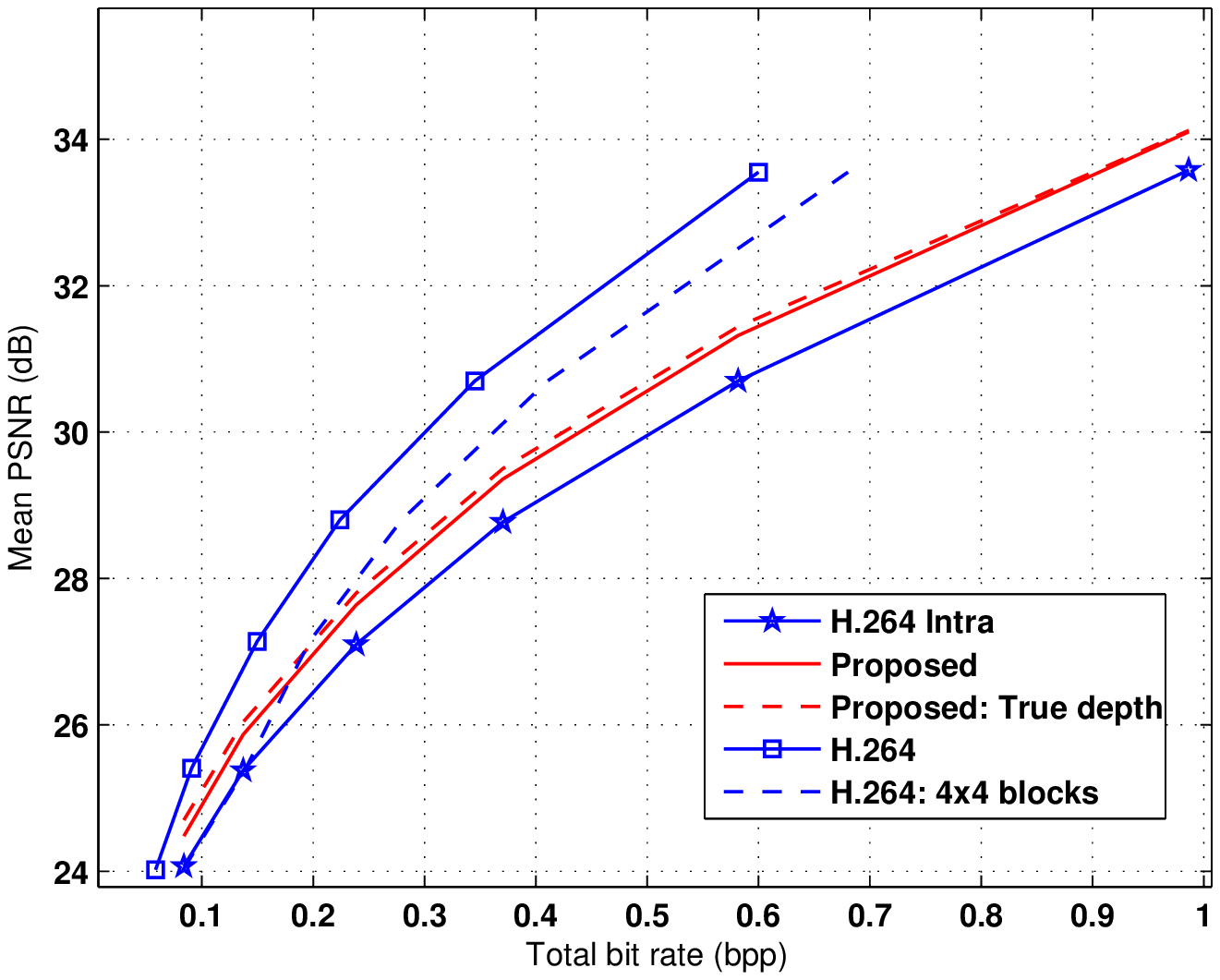} & \epsfxsize=3in \epsffile{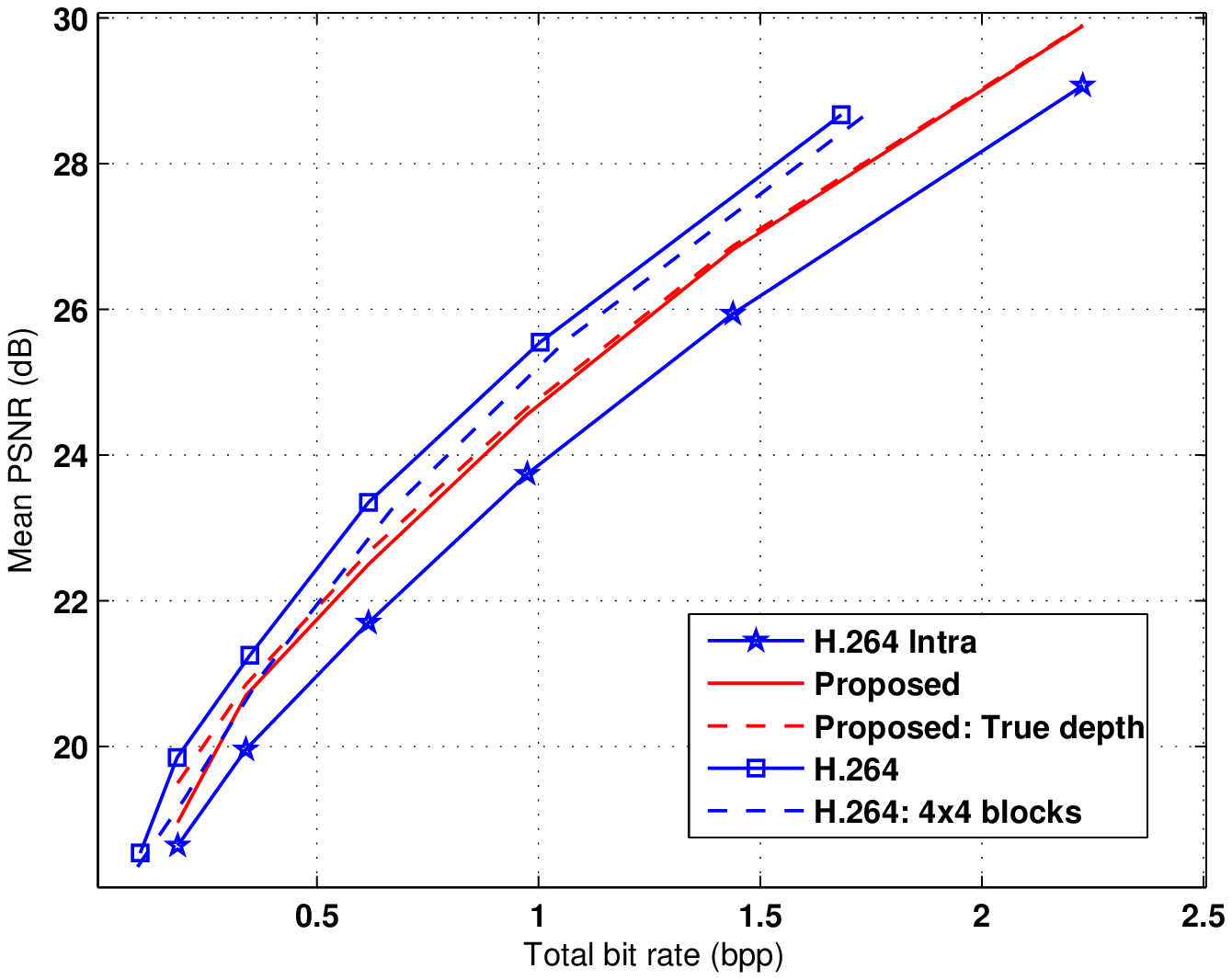} \\
   \mbox{(a)} & \mbox{(b)} \\
  \end{array}$
 \caption{Comparison of the rate-distortion performance between the independent and the joint decoding schemes as well as H.264-based joint encoding schemes: (a) Venus dataset;  and (b) Flowergarden dataset.}
 \label{Fig:jr_ir_rectified}
 \end{figure*}
 
\begin{figure}[h!]
\centering
  \epsfxsize=3in \epsffile{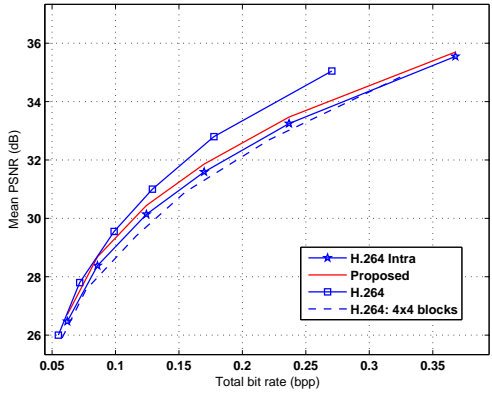}
 \caption{Comparison of the rate-distortion performances between the independent and the joint decoding schemes as well as H.264-based joint encoding schemes for the Breakdancers dataset.}
  \label{Fig:jr_ir_breakdancers}
 \end{figure}

\subsection{Performance Analysis}
We first compare our joint reconstruction results with respect to a scheme where the images are reconstructed independently. Fig.~\ref{Fig:jr_ir_rectified}(a), Fig.~\ref{Fig:jr_ir_rectified}(b) and Fig.~\ref{Fig:jr_ir_breakdancers} compare the overall quality of the decoded images between the independent (denoted as \emph{H.264 Intra}) and the joint decoding solutions (denoted as \emph{Proposed}), respectively for the Venus, Flowergarden and Breakdancers datasets. The x-axis represent the total number of bits spent on encoding the images and the y-axis represent the mean PSNR value of the reconstructed images $\hat{I}_1$ and $\hat{I}_2$. From the plots, we see that the proposed joint reconstruction scheme performs better than the independent reconstruction scheme by a margin of about $0.7$~dB, $0.95$~dB and $0.3$~dB respectively for the different datasets. This confirms that the proposed joint decoding framework is effective in  exploiting the inter-view correlation while reconstructing the images. Similar experimental results have been observed on other datasets. When compared to the first  two datasets, the gain due to joint reconstruction for the Breakdancers dataset is smaller as confirmed in Fig.~\ref{Fig:jr_ir_breakdancers}. It is well known that this dataset is weakly correlated due to large camera spacing \cite{MSR_sequence}, hence the gain provided by the joint decoding is small. 

We then quantitatively compare the RD performances between the joint and the independent coding schemes using the Bjontegaard metric \cite{Bjontegaard}.  In our experiments, we use the first four points in the RD plot for the computation in order to highlight the benefit in the low bit rate region; this corresponds to the QP values $51, 48, 45 \; \mbox{and} \; 42$. The relative rate savings due to joint reconstruction for all the six datasets is available in the second column of Table~\ref{table:ratesavings}. From the values in Table~\ref{table:ratesavings} we see that the benefit of joint reconstruction depends on the correlation among the images; in general, higher the correlation, the better the performance. For example, we see that the Flowergarden dataset gives $22.8\%$ rate savings on average compared to H.264 intra due to very high correlation. On the other hand, the Breakdancers and Ballet datasets only provide about $5\%$ rate savings due to weak correlation mainly because of large distances between the cameras. Though the gain is small for these datasets, we show later that the performance of our scheme competes with the performance of the joint encoding solutions based on H.264 at low bit rates.

\begin{table}[h!]
\caption{Rate savings with respect to the independent coding schemes based on H.264 intra for the stereo images. The average rate savings ($\%$) is computed using the Bjontegaard metric \protect{\cite{Bjontegaard}} for the QP values $52, 48, 45 \; \mbox{and} \; 42$. } 
\label{table:ratesavings}
\centering
\begin{tabular}{|c|c|c|c|c|} \hline
{ Dataset} & {Proposed }  &{Proposed:}  &{H.264: 4x4}  & {H.264} \\
{}  & & {True depth}  &  &   \\ \hline 

Tsukuba      &14.9  &20.5  &15.8   &42.4 \\ \hline

Venus     &15.9  & 21.3  &12.2   &44.9\\ \hline

Plastic      &10.3  &11  &8.4  &28.7 \\ \hline

Flowergarden   &22.8  &29.3  &29.2 &46.5 \\ \hline

Breakdancer   &5.8  &6.6   &-6.5  &8.9\\ \hline

Ballet   &4.4  &6.1   &-8.9 &2.7\\ \hline

\end{tabular}
\end{table}

  \begin{figure*}[h!]
  \centering
 $\begin{array}{@{\hspace{-0.25 in}} c@{\hspace{-0.25 in}}c @{\hspace{-0.25 in}} c@{\hspace{-0.25 in}} c@{\hspace{-0.25 in}} c} \multicolumn{1}{l}{\mbox{}} &  \multicolumn{1}{l}{\mbox{}} &\multicolumn{1}{l}{\mbox{}} &  \multicolumn{1}{l}{\mbox{}} &\multicolumn{1}{l}{\mbox{}} \\
   \epsfxsize=1.6in \epsffile{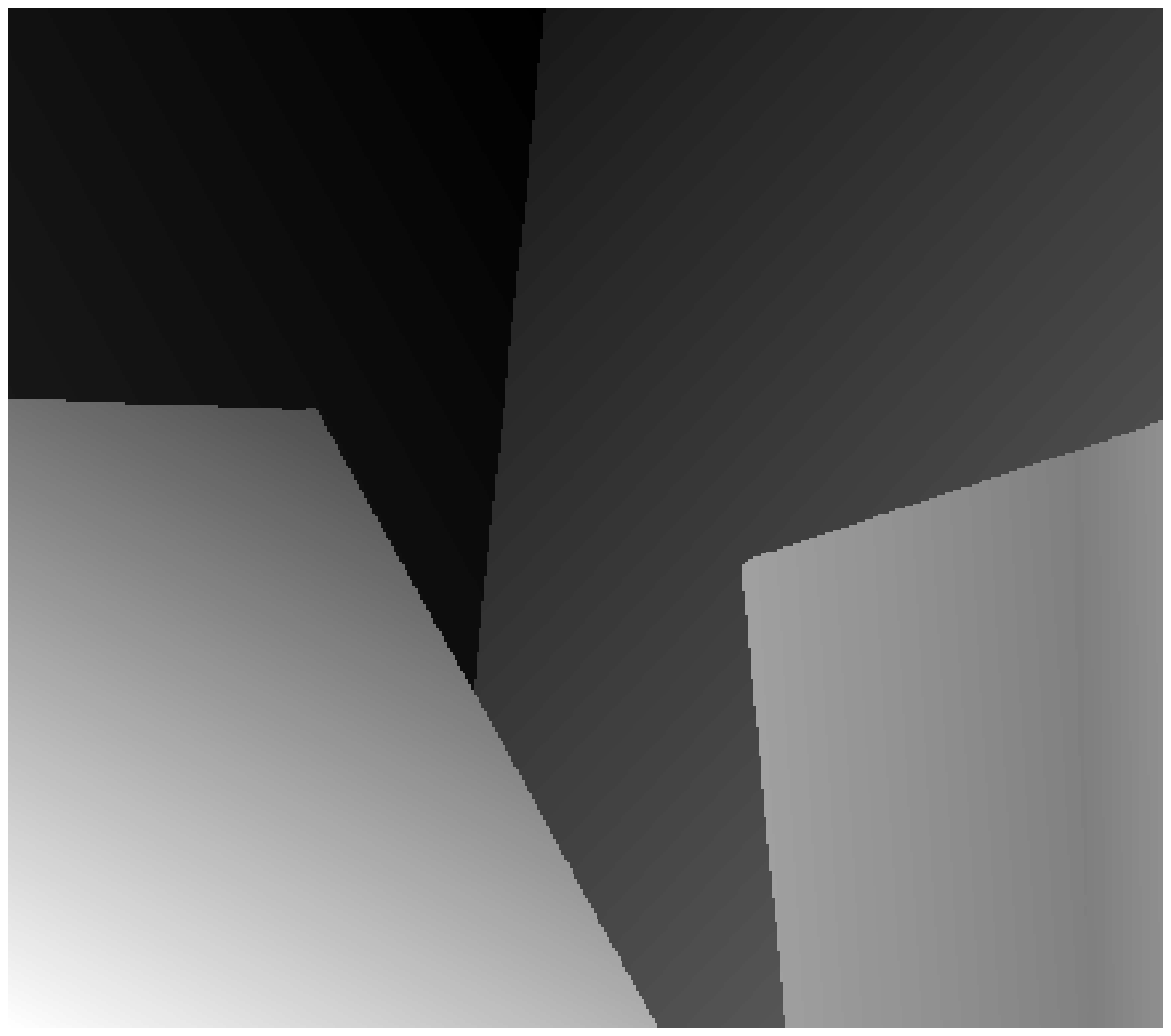} & \epsfxsize=1.6in \epsffile{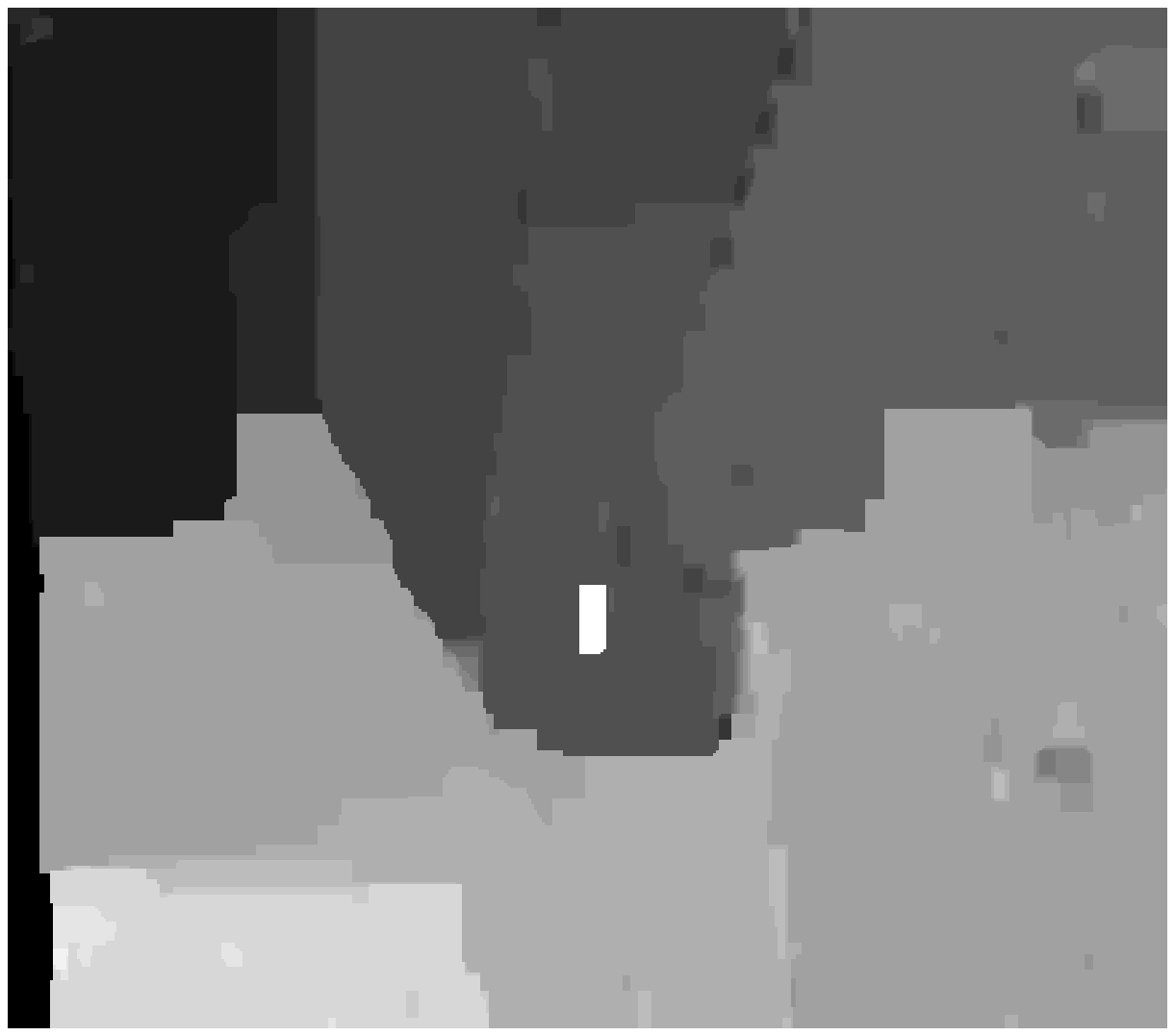} & \epsfxsize=1.6in \epsffile{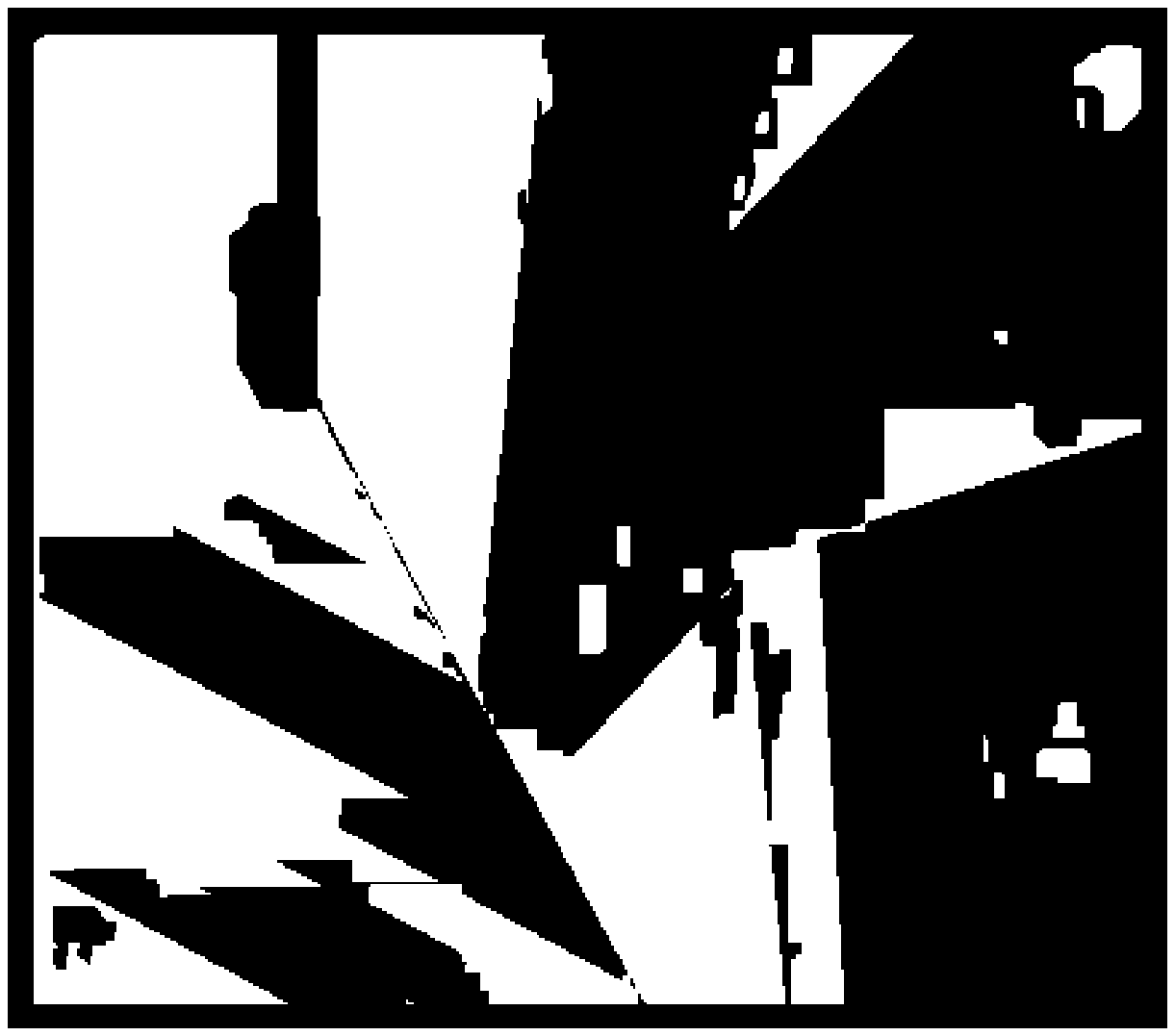} & \epsfxsize=1.6in \epsffile{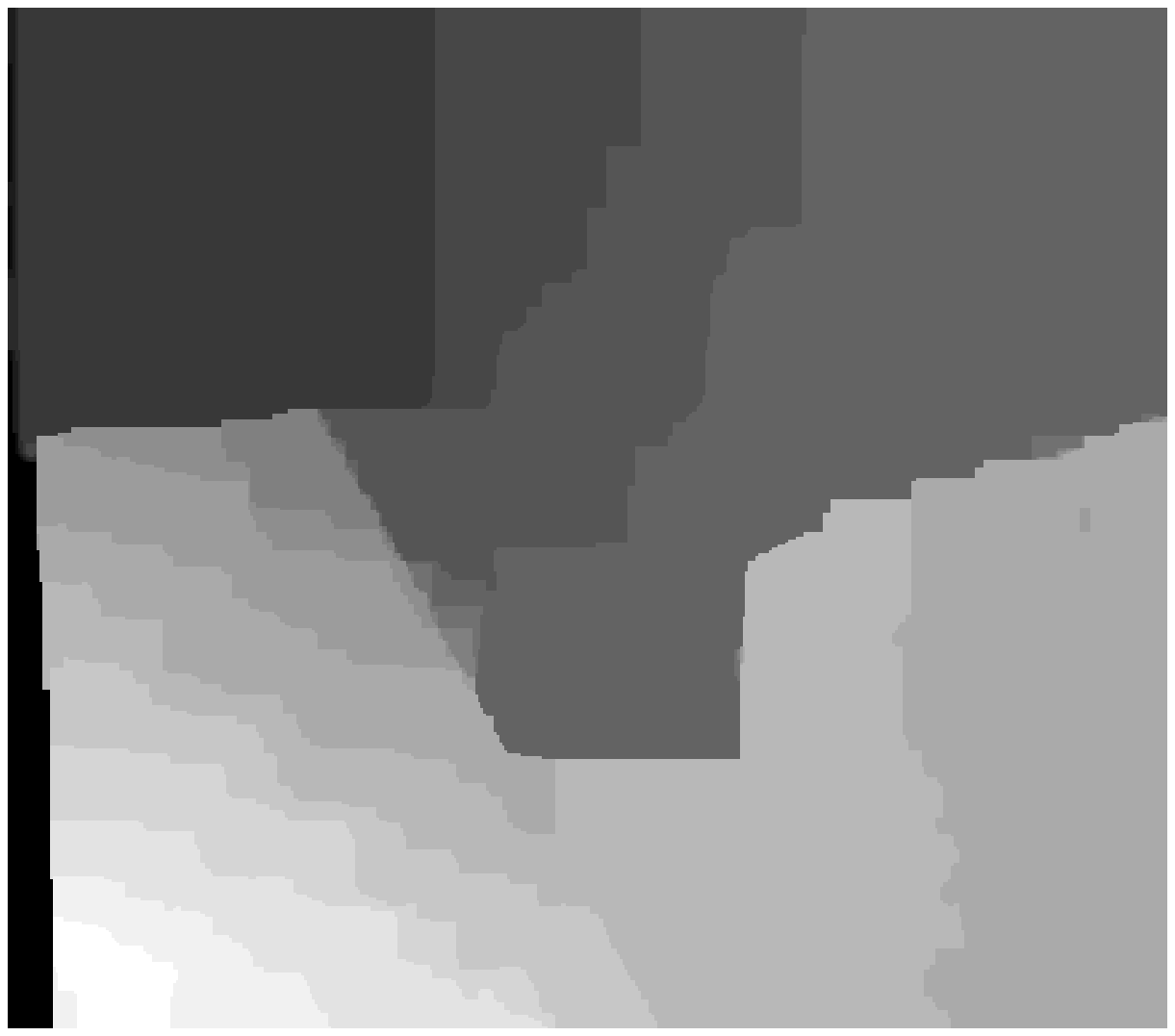} & \epsfxsize=1.6in \epsffile{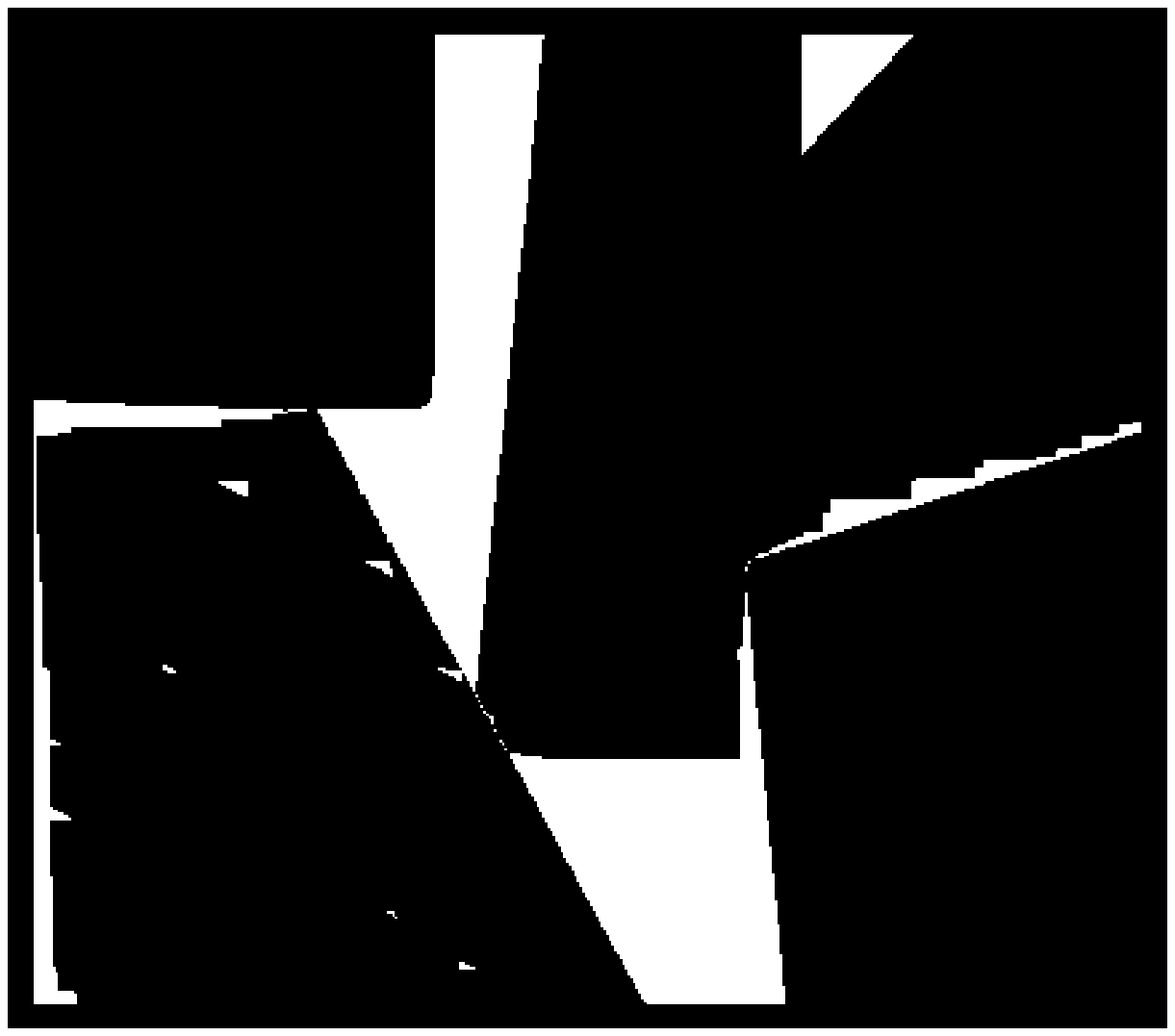} \vspace{-0.1in} \\
   \mbox{(a) $s/D_g$ } & \mbox{(b) $s/D $} & \mbox{(c) $|s/D_g-s/D| >1$} & \mbox{(d) $s/D$} & \mbox{(e) $|s/D_g-s/D| >1$}
  \end{array}$
 \caption{Comparison of the estimated depth image from compressed images with respect to the actual depth information for the Venus dataset. (a) Groundtruth disparity image $s/D_g$; (b) computed disparity image $s/D$ at  QP = 51; (c) disparity error at QP = 51.  The pixels with absolute error greater than one is marked in white. The percentage of white pixels is $43\%$. (d) Computed disparity image $s/D$ at QP = $35$; (e) disparity error at rate at QP = $35$. The percentage of white pixels is $12 \%$.  The parameter $s$ represents the product of the focal length and the baseline distance between the cameras $C_1$ and $C_2$. }
  \label{Fig:venu_deptherror}
    \end{figure*}

  \begin{figure*}[h!]
  \centering
 $\begin{array}{@{\hspace{-0.25 in}} c@{\hspace{-0.25 in}}c @{\hspace{-0.25 in}} c@{\hspace{-0.25 in}} c@{\hspace{-0.25 in}} c} \multicolumn{1}{l}{\mbox{}} &  \multicolumn{1}{l}{\mbox{}} &\multicolumn{1}{l}{\mbox{}} &  \multicolumn{1}{l}{\mbox{}} &\multicolumn{1}{l}{\mbox{}} \\
   \epsfxsize=1.6in \epsffile{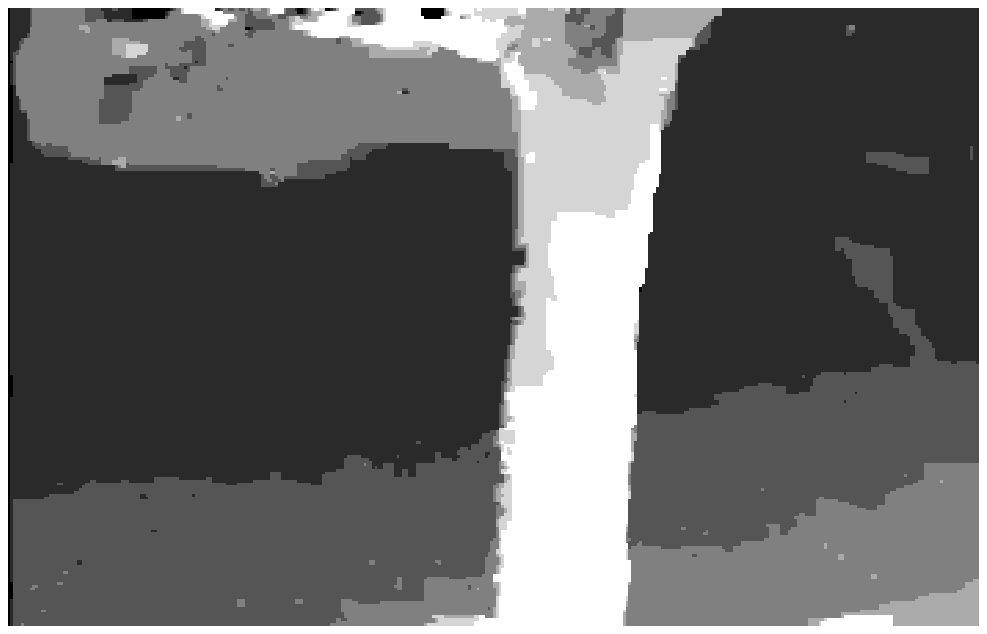} & \epsfxsize=1.6in \epsffile{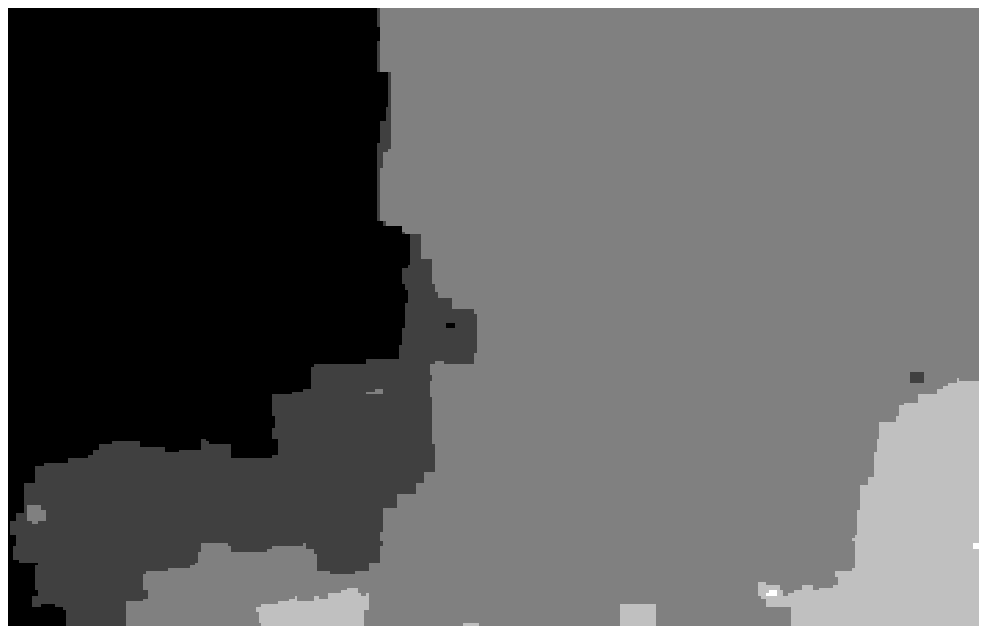} & \epsfxsize=1.6in \epsffile{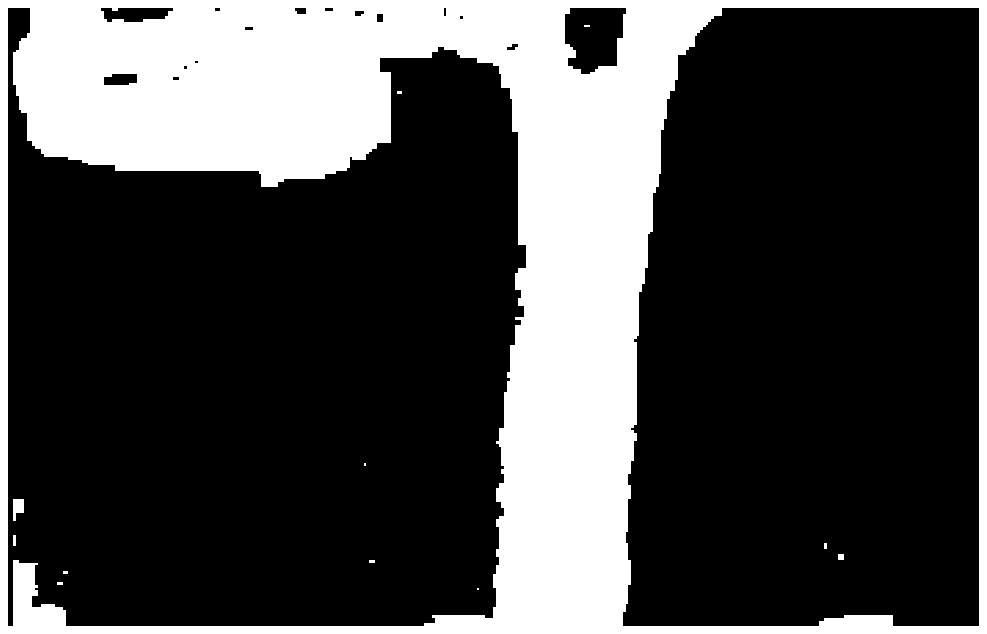} & \epsfxsize=1.6in \epsffile{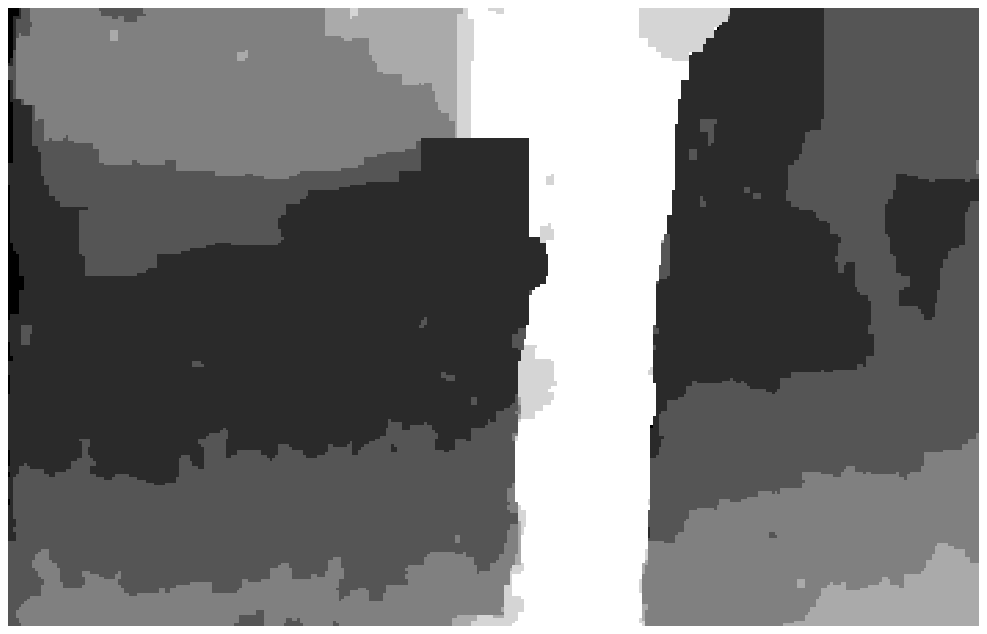} & \epsfxsize=1.6in \epsffile{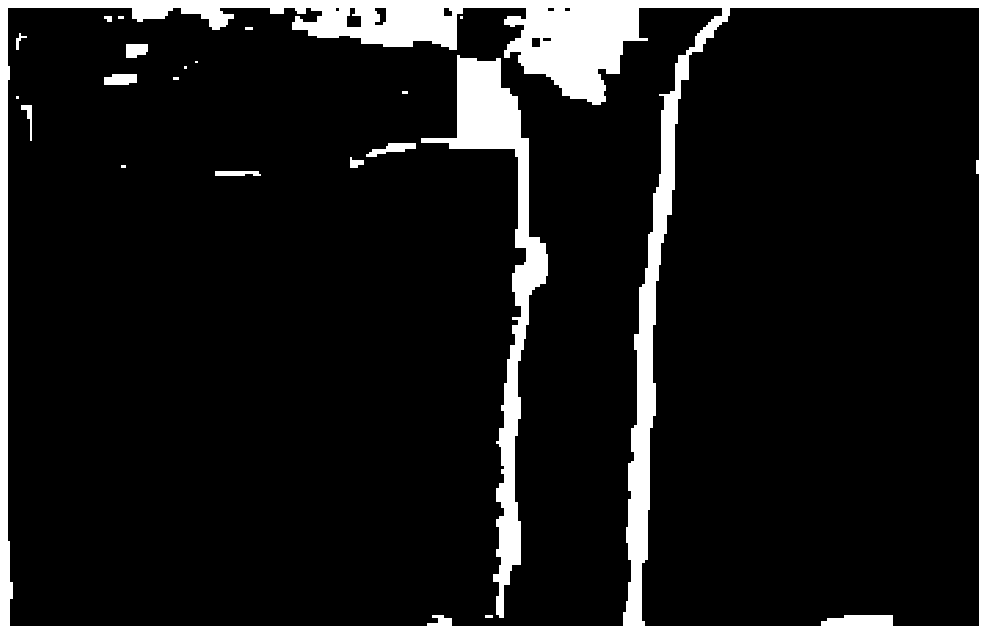} \vspace{-0.1in} \\
   \mbox{(a) $s/D_g$ } & \mbox{(b) $s/D $} & \mbox{(c) $|s/D_g-s/D| >1$} & \mbox{(d) $s/D$} & \mbox{(e) $|s/D_g-s/D| >1$}
     \end{array}$
  \caption{Comparison of the estimated depth image from compressed images with respect to the actual depth information for the Flowergarden dataset. (a) Groundtruth disparity image $s/D_g$; (b) computed disparity image $s/D$ at QP = $51$; (c) disparity error at QP = $51$.  The pixels with absolute error greater than one is marked in white. The percentage of white pixels is $25.3\%$. (d) Computed disparity image $s/D$ at QP = $35$; (e) disparity error at rate at QP = $35$. The percentage of white pixels is $6.6 \%$. The parameter $s$ represents the product of the focal length and the baseline distance between the cameras $C_1$ and $C_2$. }
  \label{Fig:flowergarden_deptherror}
  \end{figure*}

 We then carry out the same experiments in a scenario, where the images are jointly reconstructed using a correlation model that is estimated from the original images. This scheme thus serves as a benchmark for the joint reconstruction, since the correlation is accurately known at the decoder. The corresponding results are denoted as \emph{proposed: True depth} in Fig.~\ref{Fig:jr_ir_rectified}. The corresponding rate savings compared to the independent compression based on H.264 intra is given in the third column of Table~\ref{table:ratesavings}. At low bit rates, in general, we see that our scheme is away from the upper bound performance due to the poor quality of the depth estimation from compressed images.  For example, in Fig.~\ref{Fig:jr_ir_rectified}(b) (for Flowergarden dataset) we see that at bit rate of 0.2 (i.e., QP = $51$), the proposed scheme is away from the upper bound performance by a margin of around $0.5$~dB. As a result, we see in Table~\ref{table:ratesavings} that the rate savings is better, when the actual depth information is used for the joint reconstruction compared to the performance of the scheme where the depth information is estimated from compressed images. We show in Fig.~\ref{Fig:venu_deptherror}(b) and Fig.~\ref{Fig:venu_deptherror}(d) the inverse depth images (i.e., disparity images) estimated from the decoded images $\tilde{I}_1, \tilde{I}_2$ that are encoded with QP = $51$ (resp. total bit rate = 0.08 bpp) and QP = $35$  (resp. total bit rate = 0.98 bpp), respectively for the Venus dataset. Comparing the respective disparity images with respect to the actual disparity information in Fig.~\ref{Fig:venu_deptherror}(a) we observe poor quality disparity results for QP = $51$. Quantitatively, the errors in the disparity images are found to be $43\%$ and $12 \%$, respectively for QP = $51$ and QP = $35$, when it is measured as the percentage of pixels with an absolute error greater than one.  This confirms that the quantization noise in the compressed images are not properly handled while estimating the correlation information.  Similar conclusions can be derived for the Flowergarden dataset from Fig.~\ref{Fig:flowergarden_deptherror}, where, in general, the estimated depth information from highly compressed images is not accurate. Developing robust correlation estimation techniques to alleviate this problem is the target of our future works. We finally see in Fig.~\ref{Fig:jr_ir_rectified} that the reconstruction quality achieved with the correlation estimated from compressed images converges to the upper-bound performance when the rate increases or equivalently, when the quality of decoded images $\tilde{I}_1$ and $\tilde{I}_2$ improves.

\begin{figure*}[h!]
\centering
$\begin{array}{@{\hspace{-0 in}}c @{\hspace{0.1 in}}c @{\hspace{0.1 in}}c @{\hspace{0.1 in}}c} \multicolumn{1}{l}{\mbox{}} &  \multicolumn{1}{l}{\mbox{}} &  \multicolumn{1}{l}{\mbox{}}&  \multicolumn{1}{l}{\mbox{}} \\
  \epsfxsize=1.7in \epsffile{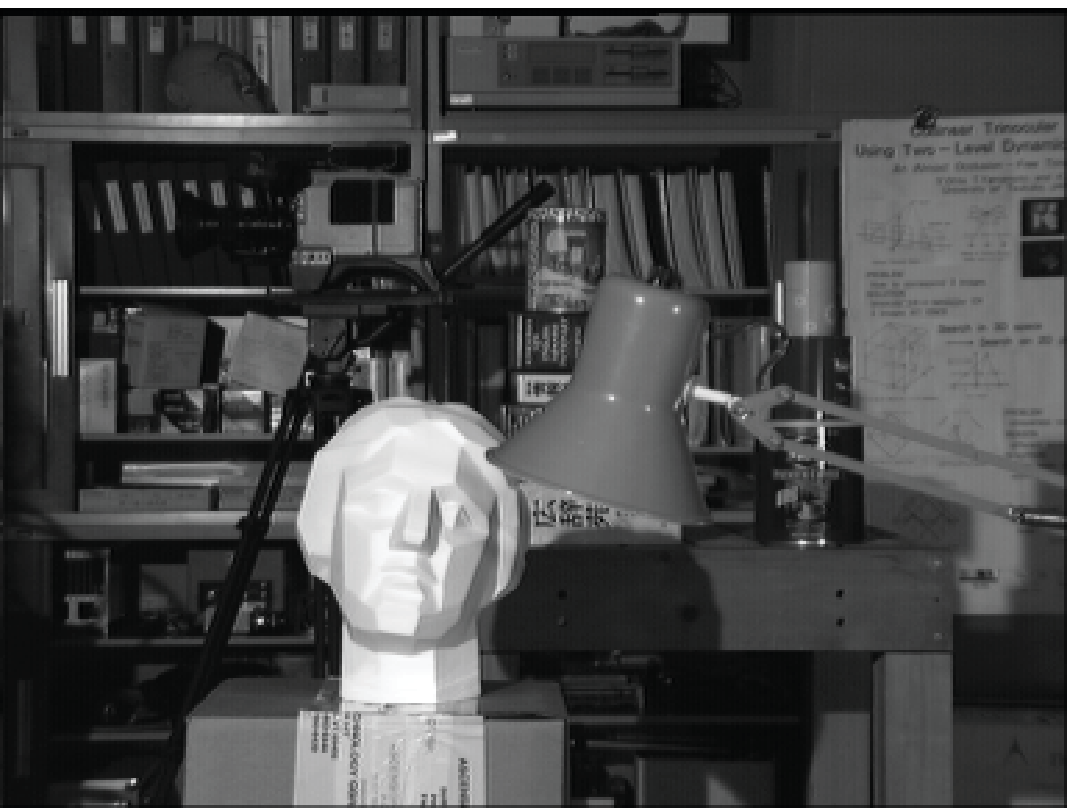} & \epsfxsize=1.7in \epsffile{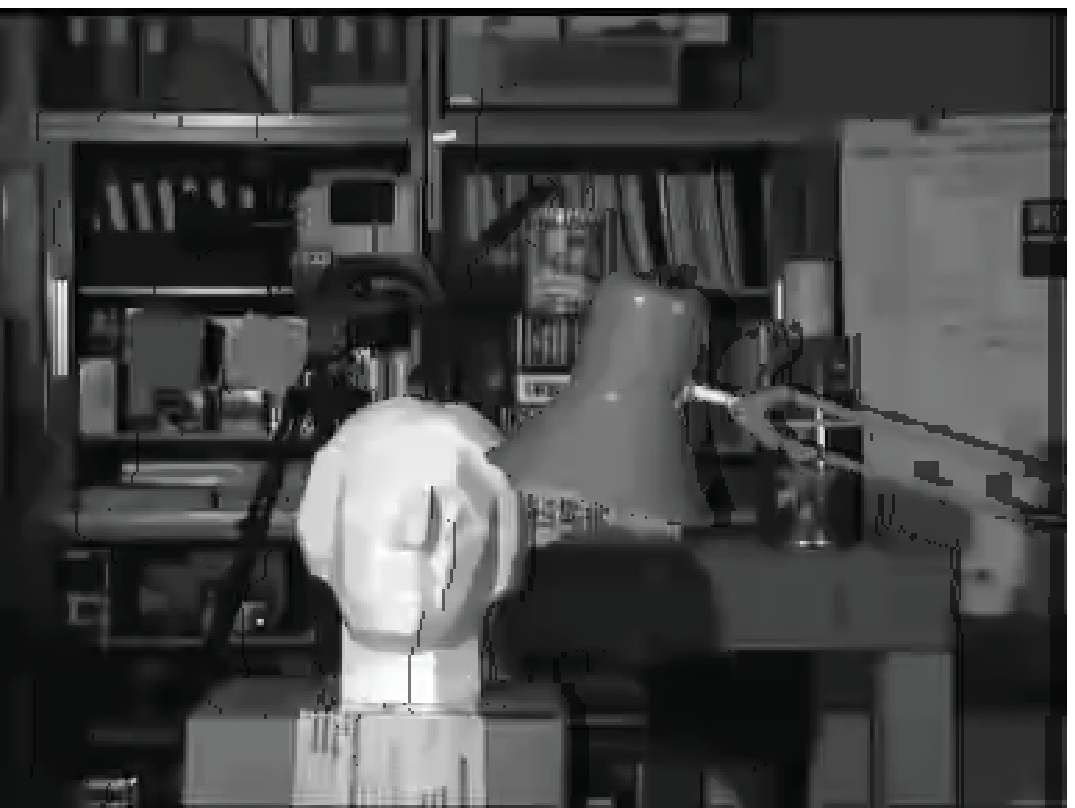} & \epsfxsize=1.7in \epsffile{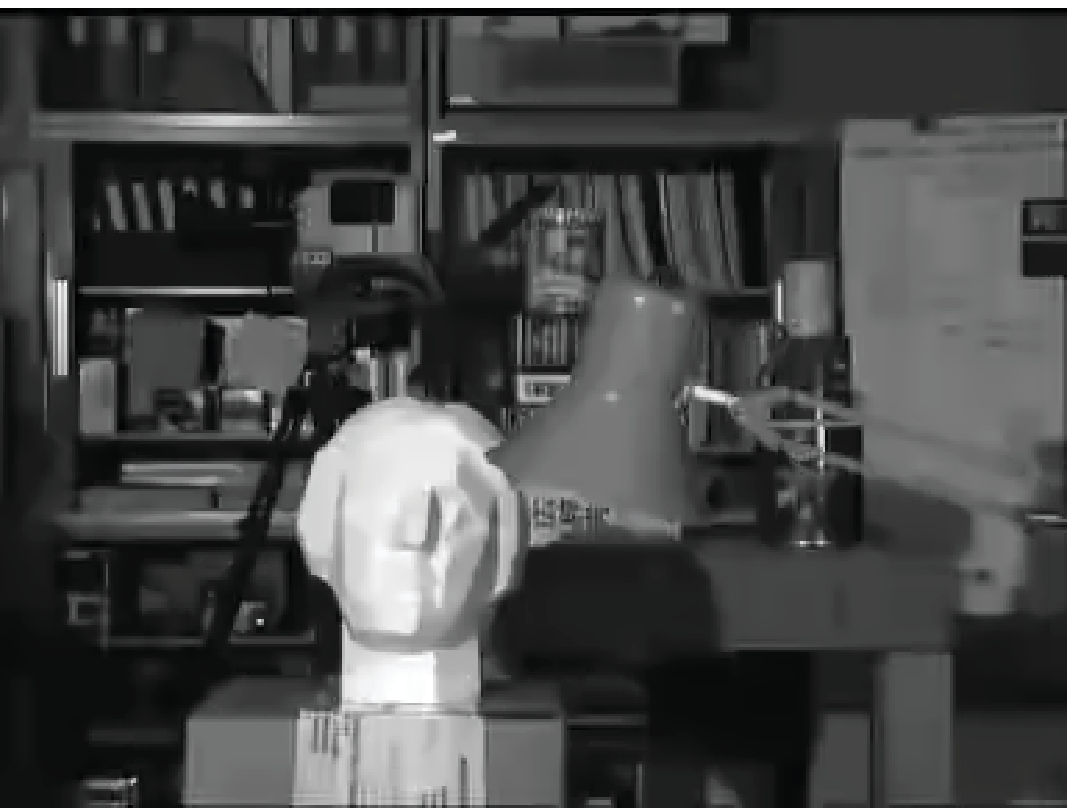}  & \epsfxsize=1.7in \epsffile{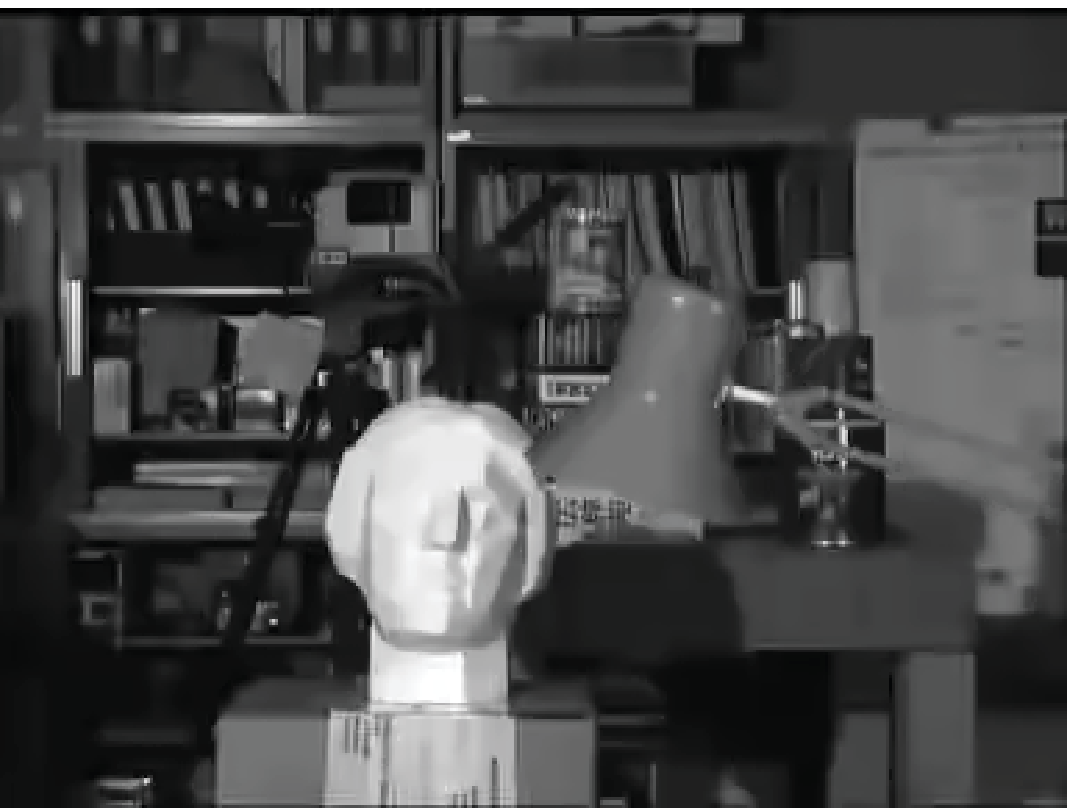}  \\
   \mbox{(a) $\mathcal{I}_2$ }& \mbox{(b) $\hat{I}_2$ } &\mbox{(c) $\hat{I}_2$} & \mbox{(d) $\hat{I}_1$ } 
  \end{array}$
 \caption{Importance of the matrix $M$ in the \ref{eqn:jr_f} optimization problem. (a) Original right image; and (b) reconstructed right image obtained as a solution of the \ref{eqn:jr_f} problem when $M = \mathbbm{1}$. (c) and (d) Reconstructed right and left images, respectively obtained as a solution of the \ref{eqn:jr_f} problem, when the matrix $M$ is constructed based on Eq.~(\ref{eqn:matrix_M}). The PSNR values of the reconstructed images are: (b) $26.84$ dB; (c) $30.01$ dB; and (d) $29.97$ dB. The experiments are carried out in the Tsukuba stereo dataset, where the images are encoded with a QP value of $42$.  }
 \label{Fig:import_M}
\end{figure*}

\begin{figure*}[h!]
\centering
$\begin{array}{c@{\hspace{0.1 in}}c} \multicolumn{1}{l}{\mbox{}} &  \multicolumn{1}{l}{\mbox{}} \\
  \epsfxsize=3in \epsffile{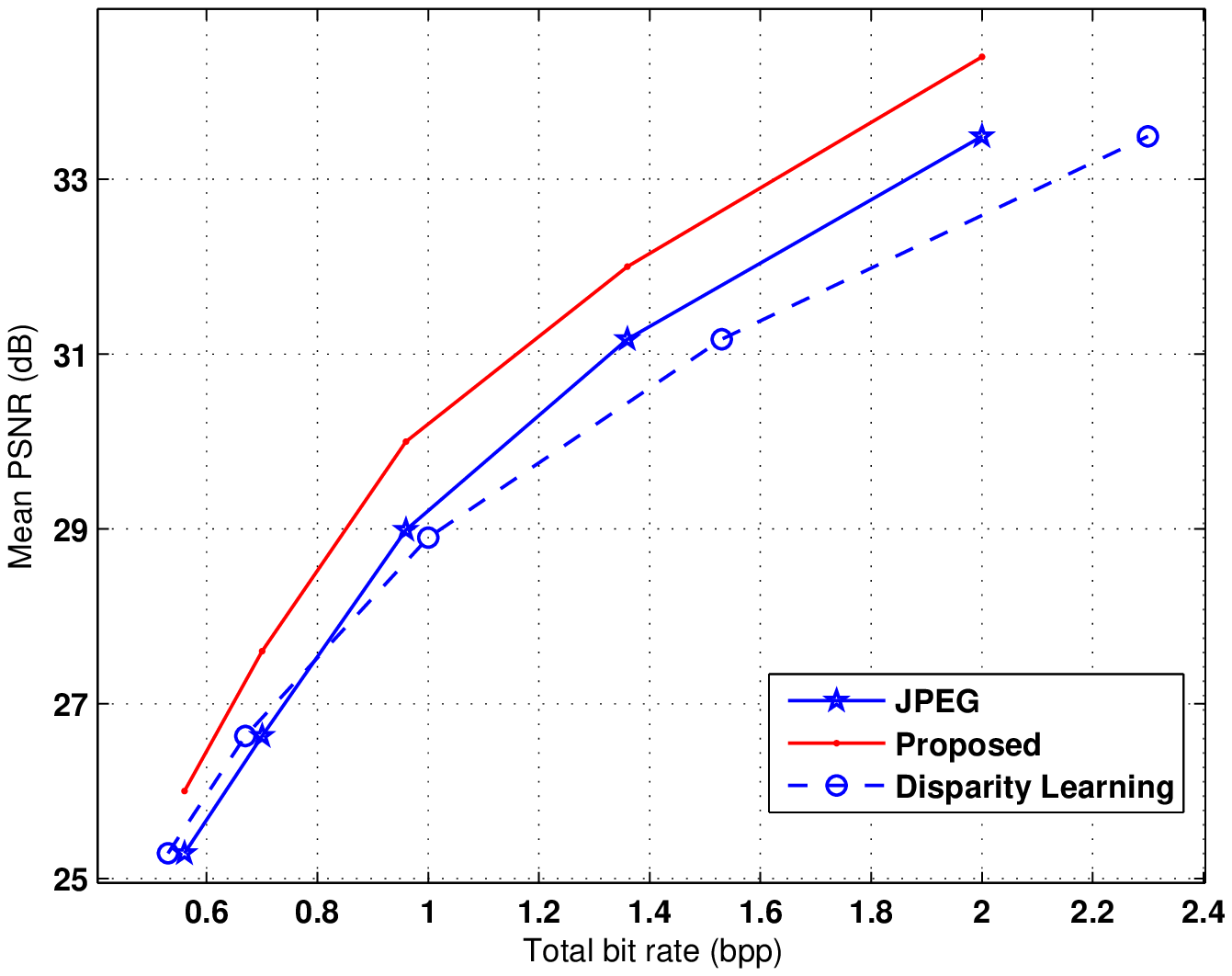} & \epsfxsize=3in \epsffile{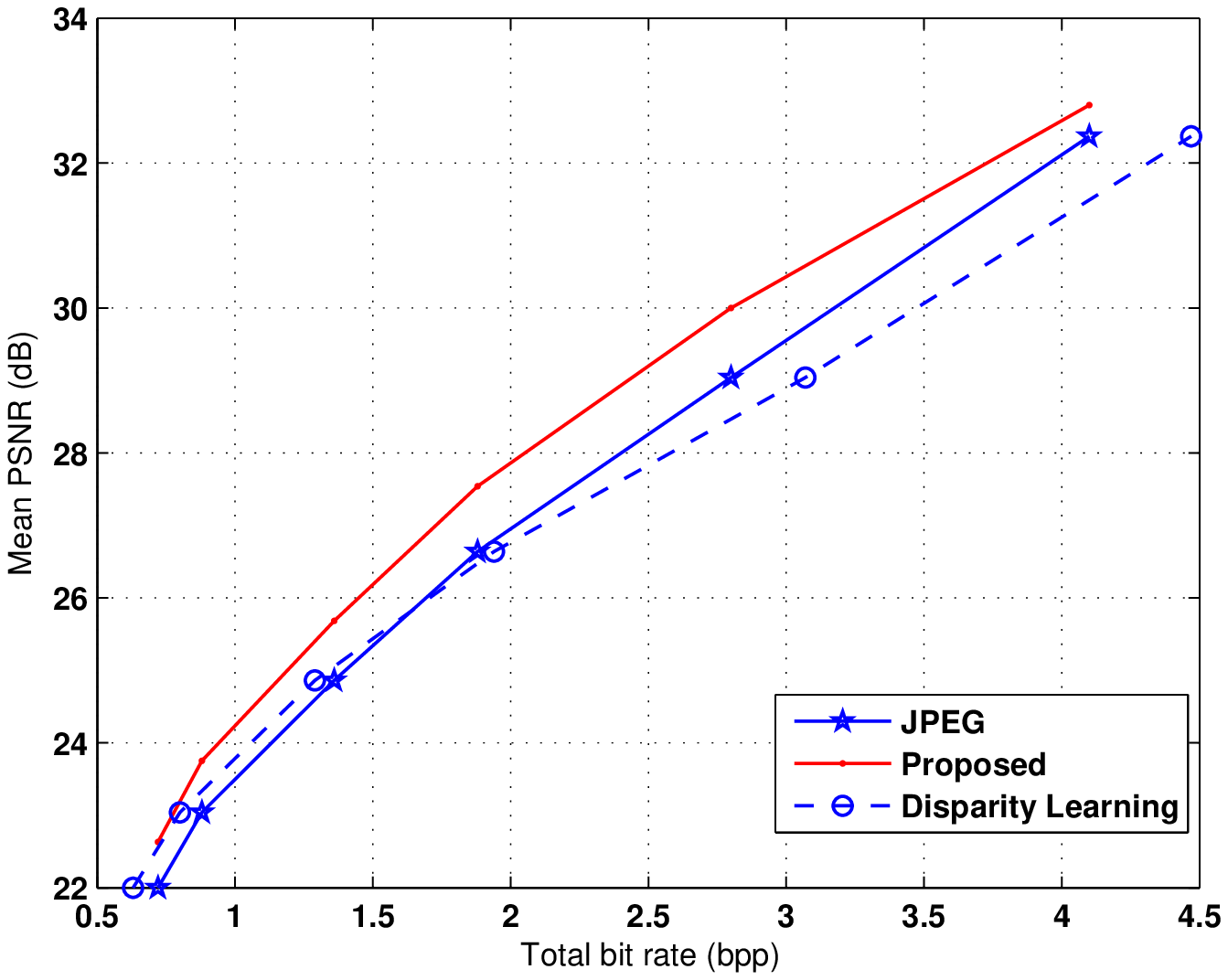} \\
   \mbox{(a)} & \mbox{(b)} \\
     \end{array}$
 \caption{Comparison of the rate-distortion performance between the independent (JPEG) and the joint decoding schemes as well as the DSC scheme based on disparity learning \cite{David}: (a) Venus dataset;  and (b) Flowergarden dataset. In this plot, the independent compression of the images $\mathcal{I}_1$ and $\mathcal{I}_2$ is performed using the JPEG coding scheme.  }
 \label{Fig:jr_jpeg}
 \end{figure*}

We now analyze the importance of the matrix $M$ in the optimization problem \ref{eqn:jr_f} which enables us to measure the correlation consistency objective only to the non-occluded pixels, i.e., the holes in the warped image $A\cdot \mathcal{R}(I_1)$ are ignored while measuring the correlation consistency between the images $A\cdot\mathcal{R}(I_1)$ and $\mathcal{R}(I_2)$. In order to highlight the benefit, we first solve the OPT-1 joint reconstruction  problem by setting $M = \mathbbm{1}$. The corresponding reconstructed right image $\hat{I}_2$ is shown in Fig.~\ref{Fig:import_M}(b). Comparing it with the original right view $\mathcal{I}_2$ in Fig.~\ref{Fig:import_M}(a), we see that the visual artifacts are noticeable in the reconstructed right image $\hat{I}_2$. In particular, we notice strong artifacts along the edges of the \emph{lamp holder} and in the \emph{face} regions; this is mainly due to the improper handling of the occluded pixels. Quantitatively, the PSNR of the reconstructed image $\hat{I}_2$ is $26.84$~dB  (respectively the quality of the reconstructed left view $\hat{I}_1$ is $29.95$~dB). We then solve the OPT-1 optimization problem with a matrix $M$ constructed using Eq.~(\ref{eqn:matrix_M}). The corresponding reconstructed right image $\hat{I}_2$ and left image $\hat{I}_1$ is available in Fig.~\ref{Fig:import_M}(c) and Fig.~\ref{Fig:import_M}(d), respectively. We now do not see any annoying artifacts in the reconstructed image $\hat{I}_2$ due to the effective handling of the occlusions via the matrix $M$. Also, the quality of the reconstructed images becomes quite similar and the respective values for the right and left views are $30.01$~dB and $29.97$~dB.

We then compare the RD performance of our scheme to a distributed coding solution (DSC) based on the LDPC encoding of DCT coefficients, where the disparity field is estimated at the decoder using Expectation  Maximization (EM) algorithms \cite{David}. The resulting RD performance is given in Fig.~\ref{Fig:jr_jpeg}(a) and Fig.~\ref{Fig:jr_jpeg}(b) (denoted as \emph{Disparity learning}) for the Venus and Flowergarden datasets, respectively. In the DSC scheme, the Wyner-Ziv image $\mathcal{I}_2$ is decoded with the JPEG-coded reference image $\mathcal{I}_1$ as the side information.  In order to have a fair comparison between the proposed scheme and this DSC scheme \cite{David}, we carry out our joint reconstruction experiments with the JPEG compressed images.  That is, instead of H.264 intra we now use JPEG for independently compressing the images $\mathcal{I}_1$ and $\mathcal{I}_2$. Then, from the JPEG coded images $\tilde{I}_1$ and $\tilde{I}_2$, we jointly reconstruct a pair of images $\hat{I}_1$ and $\hat{I}_2$ using the methodology described in Section \ref{sec:joint_decoder}. The resulting RD performance of the proposed scheme is available in Fig.~\ref{Fig:jr_jpeg}(a) and Fig.~\ref{Fig:jr_jpeg}(b), respectively for both datasets.  We first notice that the proposed joint reconstruction scheme improves the quality of the compressed images; this is consistent with our earlier observations. We further observe that the disparity learning scheme marginally improves the quality of the compressed images only at low bit rates, however, it fails to perform better than the JPEG coding scheme at high bit rates.  Also, we note that the DSC scheme in \cite{David} requires a feedback channel in order to accurately control the LDPC encoding rate, while our proposed solution does not require any statistical correlation modeling at the encoder nor any feedback channel; this clearly highlights the benefits of the proposed solution. 

For the sake of completeness, we finally compare the performance of our scheme compared to the joint encoding solutions based on H.264. In particular, the joint compression of views is carried out by setting the profile ID = 128; this corresponds to the stereo profile  \cite{JM_software}. In this profile, one of the images (say $\mathcal{I}_1$) is encoded as a I-frame while the remaining view (say $\mathcal{I}_2$) is encoded as a P-frame. We consider two different settings in the H.264 motion estimation, which is performed with a variable and a fixed macroblock size of $4\times4$. The RD performance corresponding to both cases (resp. denoted as \emph{H.264} and \emph{H.264: 4$\times$4 blocks}) is available in Fig.~\ref{Fig:jr_ir_rectified}(a), Fig.~\ref{Fig:jr_ir_rectified}(b) and Fig.~\ref{Fig:jr_ir_breakdancers} for the Venus, Flowergarden and Breakdancers datasets, respectively. Also, we report in the columns 4 and 5 of Table~\ref{table:ratesavings}, the rate savings of the joint encoding scheme compared to the H.264 intra scheme.  First, it is interesting to note that for rectified images (or when the camera motion is horizontal), our scheme competes with the H.264 joint encoding performance when a block size is set to $4\times4$. However, our scheme could not perform as well at high bit rates due to the lack of texture encoding. In other words, our scheme decodes the images by exploiting the geometrical correlation information while the visual information along the texture and edges are not  perfectly captured. However, for the non-rectified images like the Breakdancers dataset (see Fig.~\ref{Fig:jr_ir_breakdancers}), we see that our scheme competes with the joint encoding solutions based on H.264. Similar conclusions can be derived for the Ballet dataset in Table~\ref{table:ratesavings}, where the proposed scheme provides rate savings of 4.4$\%$, while H.264 saves only 2.7$\%$.  This is because, when the images are not rectified, which is the case in the Breakdancers and Ballet datasets, the block-based motion compensation is not an ideal model to capture the inter-view correlation. Also, for the same reason, we see in Fig.~\ref{Fig:jr_ir_breakdancers}  that the H.264 joint encoding with $4\times4$ blocks performs even worse than the H.264 intra coding scheme; this is indicated with a negative sign in Table~\ref{table:ratesavings}.

\section{Joint Reconstruction of multiple images} \label{sec:multiview}

\subsection{Optimization Problem}
So far, we have focused on the distributed representation of pairs of images.  Now, we describe the extension of our framework to datasets with $J$ correlated images $\mathcal{I}_1,\mathcal{I}_2, \ldots, \mathcal{I}_J$ that are captured by the cameras $C_1, C_2, \ldots, C_J$ from different viewpoints. We further assume that the $J$ cameras are calibrated, where we denote the intrinsic camera matrix respectively, for the $J$ cameras as $P_1, P_2, \ldots, P_J$. Also, let $R_1, R_2, \ldots, R_J$ and $T_1, T_2, \ldots, T_J$, respectively represent the rotation and translation of the $J$ cameras with respect to the global coordinate system. Similarly to the stereo setup, the $J$ correlated images $\mathcal{I}_1,\mathcal{I}_2, \ldots, \mathcal{I}_J$ are compressed independently (e.g., H.264 intra or JPEG) with a balanced rate allocation. The compressed visual information is transmitted to the central decoder, where we jointly process all the $J$ compressed views in order to take benefit of the inter-view correlation for improved reconstruction quality. In particular, as carried out in stereo decoding framework, we first estimate a depth image from the $J$ decoded images (resp. $\tilde{I}_1,\tilde{I}_2, \ldots, \tilde{I}_J$) and we use it for joint signal recovery. The $J$ reconstructed images are respectively given as $\hat{I}_1,\hat{I}_2, \ldots, \hat{I}_J$.

We propose to estimate the depth image from the $J$ decoded images in a regularized energy minimization framework as a tradeoff between a data term $\mathcal{E}_{d}$ and a smoothness term $\mathcal{E}_{s}$. The depth image $D$ is estimated by minimizing the energy $\mathcal{E}$ that is represented as 
\begin{equation} \label{eqn:energy_mv}
D = \underset{D_c}{\operatorname{argmin}} \; \mathcal{E}(D_c) = \underset{D_c}{\operatorname{argmin}} \; \{ \mathcal{E}_{d}(D_c) + \lambda \; \mathcal{E}_{s}(D_c)\}.
\end{equation}
where $D_c$ represents the candidate depth images. Note that this formulation is similar to Eq.~(\ref{eqn:energy_chap4}) in the stereo case. 

The data term $\mathcal{E}_{d}(D_c)$ in the multi-view setup should measure the cost of assigning a depth image $D_c$ that is globally consistent with all the compressed images. In the literature, there are plenty of works that address the problem of finding a good multi-view data cost function with global consistency, e.g.,~\cite{mvstereo_overview,multiview_gc, Stretcha}. In this work, for the sake of simplicity, we propose to compute the global photo consistency as the cumulative sum of the data term $E_d(D_c)$ given in Eq.~(\ref{eqn:datacost}). That is, the global photo consistency term is given as 
 \begin{equation} \label{eqn:datacost_mv}
\mathcal{E}_d(D_c) =  \sum_{j=2}^{J}  \sum_{m,n}^{N_1,N_2} \sqnorm{\tilde{I}_j(m,n)- \mathcal{W}_j(\tilde{I}_1(m,n), D_c(m,n))}, 
\end{equation}
where $\mathcal{W}_j$ is the warping function that projects the intensity values in the view $1$ to the view $j$ using the depth information $D_c$. As described previously in Section \ref{sec:depth_est}, this warping is a two step process.  We first project the pixels from view $1$ to the global coordinate system using Eq.~(\ref{eqn:proj_step1}) and then it is projected to the view $j$ using the camera parameters $P_j, R_j$ and $T_j$ (see Eq.~(\ref{eqn:proj_step2})). The objective of the smoothness cost $\mathcal{E}_s$ is to enforce consistency in the depth solution. For a candidate depth image $D_c$, the smoothness energy is computed using Eq.~(\ref{eqn:chap4_energy}). Finally, the minimization problem of Eq.~(\ref{eqn:energy_mv}) can be solved using strong optimization techniques (e.g., Graph Cuts) in order to estimate a depth image $D$ from the decoded images. 
At last, we note that one could estimate a more accurate depth information by considering additional energy terms in the energy model of Eq.~(\ref{eqn:energy_mv}) in order to properly account for the occlusions, global scene visibility, etc. More details are available in the overview paper \cite{mvstereo_overview}. 

Now, we focus on the joint decoding problem, where we are interested in the reconstruction of $J$ correlated images from the compressed information $\tilde{I}_1,\tilde{I}_2, \ldots, \tilde{I}_J$; this is carried out by exploiting the correlation that is given in terms of depth information $D$ or from the operator $A$ derived from the depth $D$ as described in Section \ref{sec:warp_as_linear}. In particular, we can represent the warping operation $\mathcal{W}_j(\tilde{I}_1,D)$ as matrix multiplication of the form $\bar{I}_j = A_j\cdot\mathcal{R}(\tilde{I}_1)$, where $\bar{I}_j$ represents an approximation of the image at viewpoint $j$. We propose to jointly reconstruct the $J$ multi-view images as a solution to the following optimization problem: 
\small
\begin{align} \tag{OPT-2} \label{eqn:jr_mv_n}
(\hat{I}_1, \hat{I}_2, \ldots, \hat{I}_J) =  \underset{I_1, I_2, \ldots, I_J}{\operatorname{argmin}}  \;
& \sum_{j =1}^J \norm{I_j}_{TV}  \\ \nonumber
  \mbox{s.t.} \; 
&  \norm{\mathcal{R}(I_1) -  \mathcal{R}(\tilde{I}_1)}_2 \leq  \delta_1, \\  \nonumber
&  \norm{\mathcal{R}(I_2) - \mathcal{R}(\tilde{I}_2)}_2 \leq  \delta_1, \ldots, \\ \nonumber
&   \norm{\mathcal{R}(I_J) - \mathcal{R}(\tilde{I}_J)}_2 \leq  \delta_1, \\ \nonumber
& \sum_{j=2}^J \sqnorm{M_j(\mathcal{R}(I_j) - A_j\cdot \mathcal{R}(I_1))}  \leq \delta_2,
\end{align}
\normalsize
where $M_j$ (see Eq.~(\ref{eqn:matrix_M})) is a diagonal matrix that is constructed using a similar procedure described in Section~\ref{sec:jr}; this allows to measure the correlation consistency to only to those pixels that are available in all the views. From the above equation, we see that the proposed reconstruction algorithm estimates $J$ TV smooth images that are consistent with both the compressed and the correlation (depth) informations. It is interesting to note that by setting $J=2$ in OPT-2, we get the stereo joint reconstruction problem OPT-1. 

Finally, using the results derived in Prop.~\ref{prop:jr_cvx} it is easy to check that the optimization problem OPT-2 is convex. Therefore, our multi-view joint reconstruction problem OPT-2 can also be solved using proximal splitting methods. We can rewrite the OPT-2 problem as 
\begin{align} \label{eqn:jr_mv_mod}
\underset{X \in \mathbb{R}^{JN}}{\operatorname{argmin}} \; &  \sum_{j =1}^J \norm{\mathcal{R}^{-1}(S_jX)}_{TV} \\  \nonumber
 \mbox{s.t.}  \;
&  {\norm{S_1(Y - X)}_2 \leq  \delta_1},  {\norm{S_2(Y - X)}_2 \leq  \delta_1},  \ldots,  \\   \nonumber
& {\norm{S_J(Y - X)}_2 \leq  \delta_1}, {\norm{HX}_2^2 \leq \delta_2}.  
\end{align}
Here, $X = [ \mathcal{R}({I}_1); \; \mathcal{R}(I_2); \; \cdots \; ; \mathcal{R}(I_J)  ]$, $Y = [  \mathcal{R}(\tilde{I}_1) ;\; \mathcal{R}(\tilde{I}_2) ; \; \cdots \; ;\mathcal{R}(\tilde{I}_J) ]$,  $S_1=[ \mathbbm{1} \; 0  \; \cdots \; 0 ]$, $S_J=[ 0 \: 0  \; \cdots  \;\mathbbm{1}]$,  and the matrix $H$ is given as 
\begin{equation} 
                  H  = {\left[ \begin{array}{ccccc}
               -M_2A_2 & M_2 &0 & \ldots &0  \\
                -M_3A_3 & 0  & M_3 & \ldots &0 \\
                 \vdots & \vdots  & \ddots & \ddots & \vdots  \\
                 -M_JA_J & 0 & 0 & \ldots & M_J \end{array} \right]} .
                 \end{equation}
It can be noted that the above optimization problem is an extension to the one described in Eq.~(\ref{eqn:jr_mod_chap4}), where the TV prior, measurement and correlation consistency objectives are now applied to all the $J$ images. Therefore, the \emph{prox} operators for the objective function and the constraints of Eq.~(\ref{eqn:jr_mv_mod}) can be computed as described in Section \ref{Sec:optmeth}.

\subsection{Performance Evaluation}
We now evaluate the performance of the multi-view joint reconstruction algorithm using five images (center, left, right, bottom and top views) of the
\emph{Tsukuba}  \cite{multiview_gc}, three views (views 0, 1 and 2) of the \emph{Plastic} \cite{scharstein_plastic}, three views (views 0, 2 and 4) of the \emph{Breakdancers} and three views (views 3, 4 and 5) of the \emph{Ballet} \cite{MSR_sequence}.  Similarly to the stereo setup, we independently encode the multi-view images using H.264 intra by varying the QP values. At the joint decoder, we estimate a depth image $D$ from the compressed images by solving Eq.~(\ref{eqn:energy_mv}) with parameters $(\lambda, \tau)$ = $(390,4),(180,4), (330, 180)$ and $(300, 180)$, respectively for the different datasets. Then, using the estimated depth image $D$ we jointly decode the multiple views as a solution to the problem OPT-2 with the matrix $M_j$ constructed using Eq.~(\ref{eqn:matrix_M}). This problem is solved with the parameters $(\delta_1, \delta_2) = (2.5, 7), (1,3), (2.3, 2)$ and $(1.1,4.3)$, respectively for the datasets. Finally, we iterate the PPXA algorithm for 100 times in order to reconstruct the $J$ correlated images.

\begin{figure*}[h!]
\centering
$\begin{array}{c@{\hspace{0.1 in}}c} \multicolumn{1}{l}{\mbox{}} &  \multicolumn{1}{l}{\mbox{}} \\
  \epsfxsize=3in \epsffile{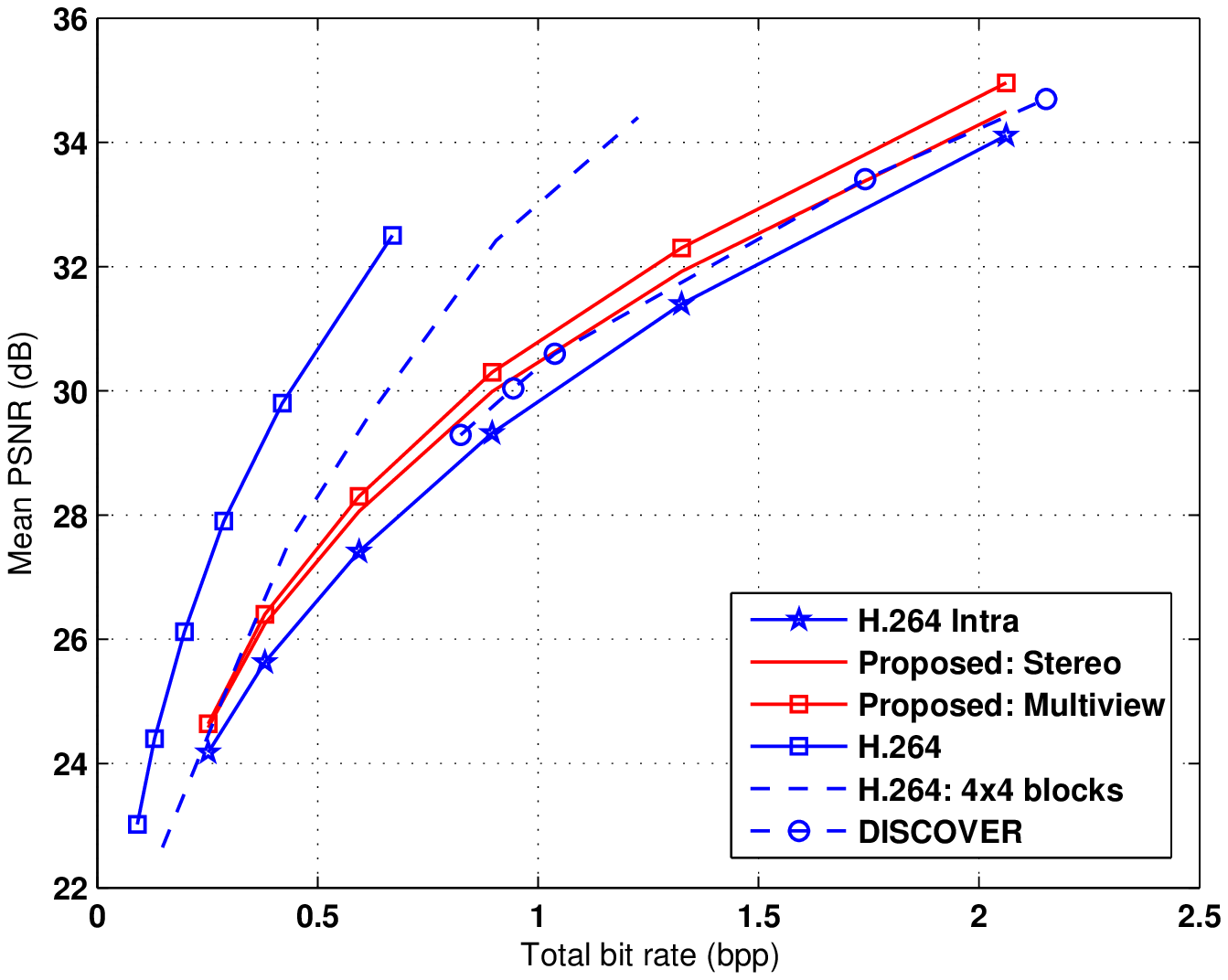} & \epsfxsize=3in \epsffile{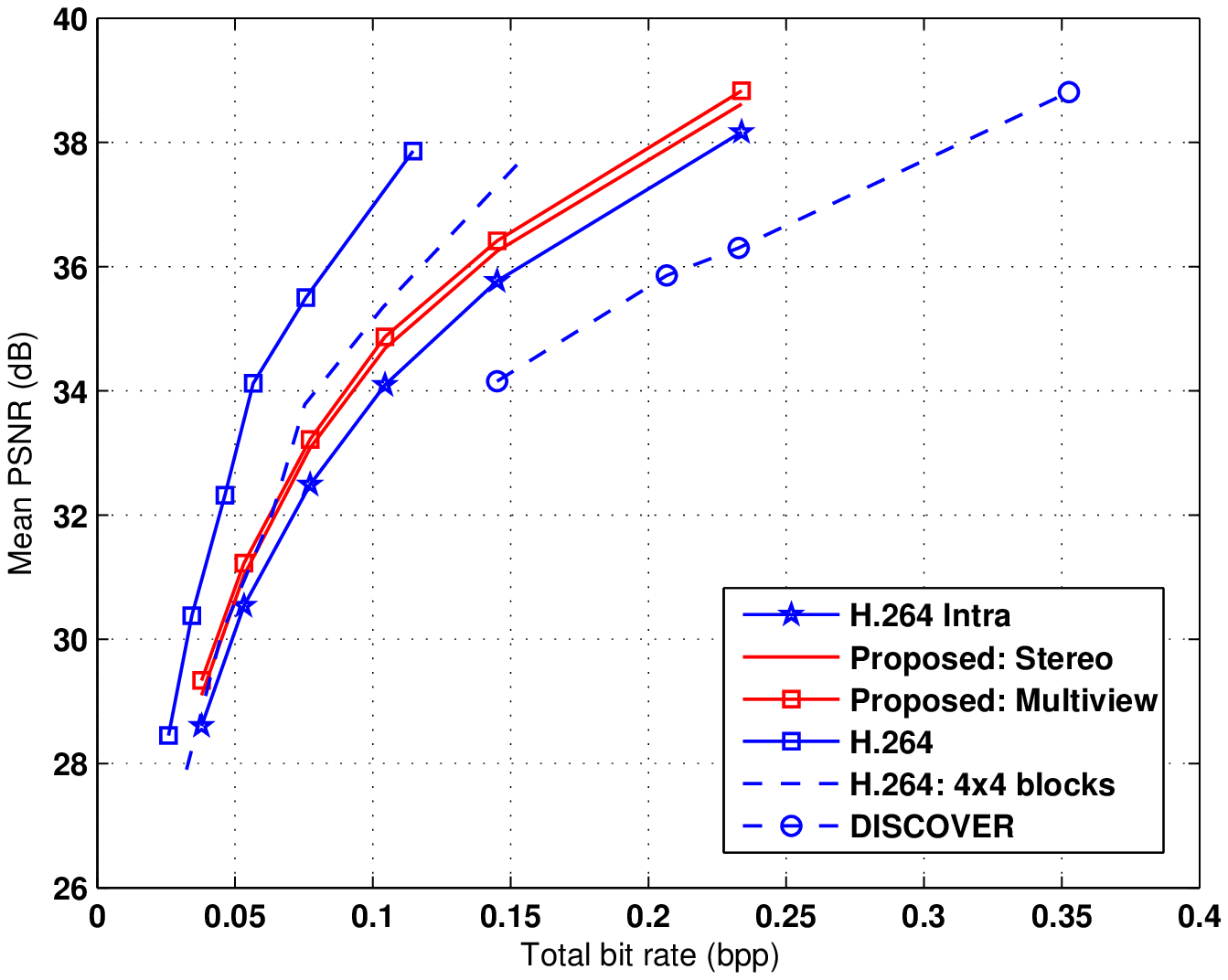} \\
   \mbox{(a)} & \mbox{(b)}
  \end{array}$
 \caption{Comparison of the rate-distortion performance between the independent, proposed, DISCOVER \cite{discover} and H.264-based joint encoding schemes: (a) Tsukuba dataset; and (b) Plastic dataset. The joint reconstruction is performed with $J = 5$ and $J = 3$ views, respectively for the Tsukuba and Plastic datasets. }
 \label{Fig:rd_mv}
\end{figure*}

\begin{figure*}[h!]
\centering
$\begin{array}{c@{\hspace{0.1 in}}c} \multicolumn{1}{l}{\mbox{}} &  \multicolumn{1}{l}{\mbox{}} \\
  \epsfxsize=3in \epsffile{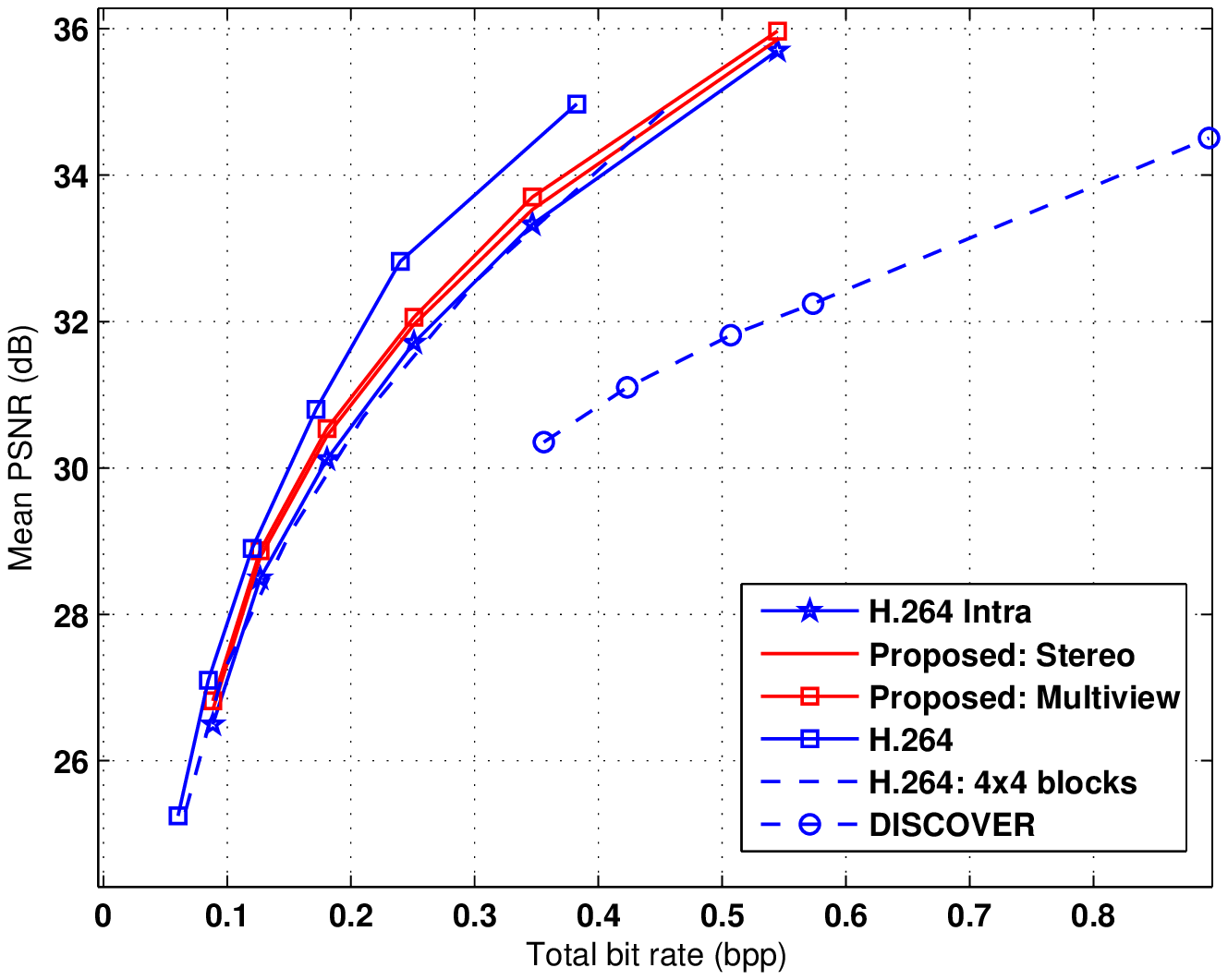} & \epsfxsize=3in \epsffile{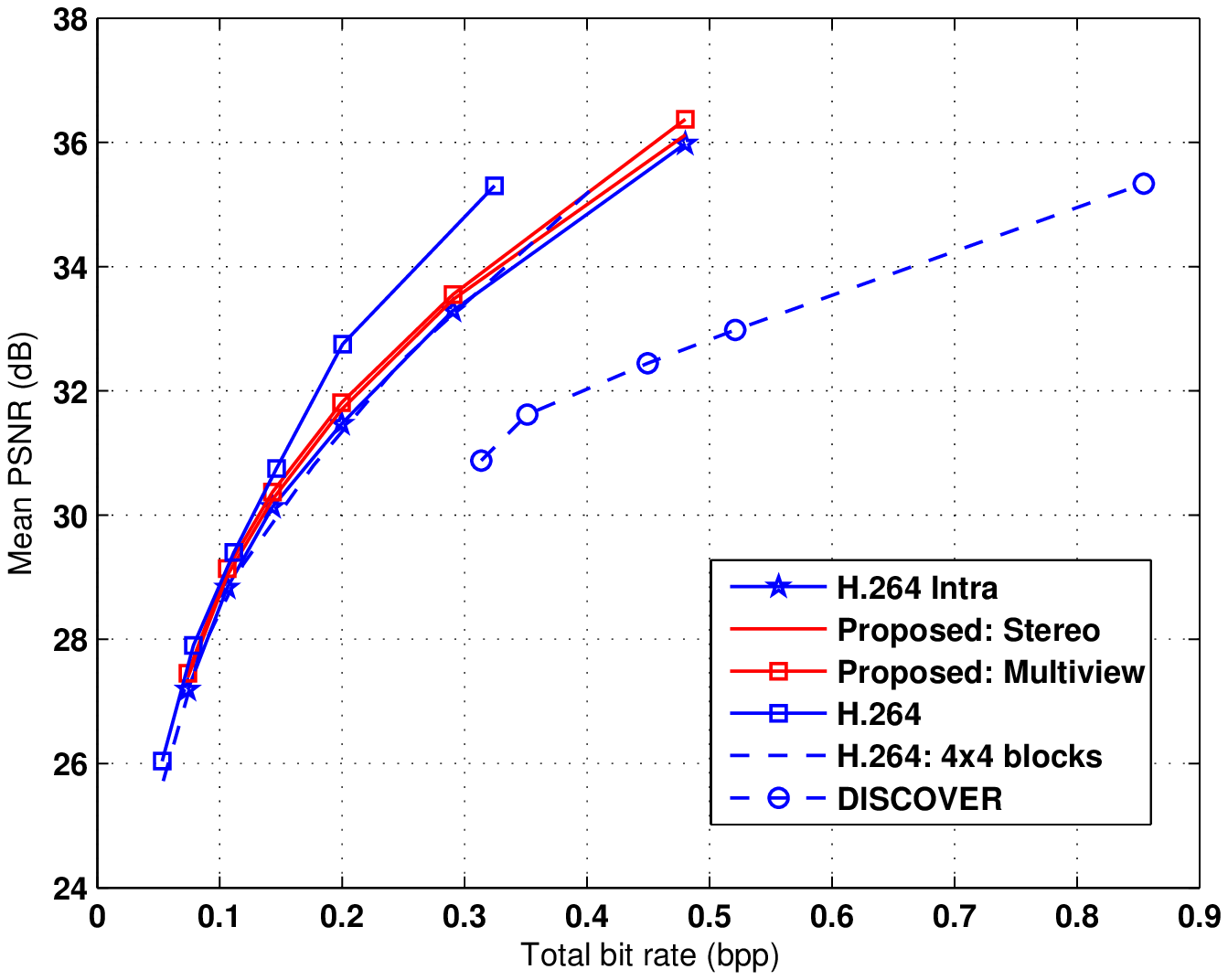} \\
   \mbox{(a)} & \mbox{(b)}
  \end{array}$
 \caption{Comparison of the rate-distortion performance between the independent, proposed, DISCOVER \cite{discover}  and H.264-based joint encoding schemes: (a) Breakdancers dataset; and (b) Ballet dataset. The joint reconstruction is performed with $J = 3$ views. }
 \label{Fig:rd_mv2}
\end{figure*}

We first compare our results with a stereo setup, where the depth estimation and the joint reconstruction steps are carried out with pairs of images. In more details, we take $\mathcal{I}_1$ as being the center image in Tsukuba, the view 1 in Plastic, the view 2 in Breakdancers and the view 4 in Ballet, respectively and we perform joint decoding between the image $\mathcal{I}_1$ and rest of images by selecting different pairs of images independently (all pairs include $\mathcal{I}_1$). For example, for the Tsukuba dataset, we perform the depth estimation and the joint reconstruction steps in the following order: (i) center and right views;  (ii) center and left views; (iii) center and top views; and (iv) center and bottom views. After decoding all the images, we take the mean PSNR of all the reconstructed images. Note that, in this setup the center image is reconstructed four times. For a fair comparison, we keep the reconstructed image $\hat{I}_1$ that gives highest PSNR with respect to $\mathcal{I}_1$. In a similar way, the experiments are carried out for the other datasets, where we perform the joint reconstruction of pairs of images and then compute the average PSNR of the reconstructed images. 
 The resulting RD performance is denoted as \emph{Proposed: Stereo} in Fig.~\ref{Fig:rd_mv}(a), Fig.~\ref{Fig:rd_mv}(b), Fig.~\ref{Fig:rd_mv2}(a) and Fig.~\ref{Fig:rd_mv2}(b) for the different datasets. From Fig.~(\ref{Fig:rd_mv}) and Fig.~(\ref{Fig:rd_mv2})  it is clear that the proposed joint multi-view reconstruction scheme (denoted as \emph{Proposed: Multiview}) performs better than the algorithm where the images are handled in pairs. It clearly highlights the benefits of our proposed solution. We also calculate the rate savings compared to an H.264 intra encoding and the results are tabulated in the second and third columns of Table \ref{table:ratesavings_mv}. It is clear that the rate savings are higher in the multi-view setup than in the stereo setup. Finally, we note that the proposed multi-view joint decoding framework is a simple extension of the stereo image reconstruction algorithm. Still, it permits to show  experimentally that it is beneficial to handle all the multi-view images simultaneously at the decoder rather decoding them by pairs. We strongly believe that the rate-distortion performance in the multi-view problem can be further improved when the depth information is estimated more accurately. For instance, this can be achieved by explicitly considering the visibility and occlusion constraints in the depth estimation framework, e.g., \cite{multiview_gc, Stretcha}. We leave this topic as part of our future work.

\begin{table}[h!]
\caption{Rate savings with respect to the independent coding schemes based on H.264 intra for the multi-view problem. The rate savings $\%$ is computed using the Bjontegaard metric \protect{\cite{Bjontegaard}} for the QP values of $52, 48, 45 \; \mbox{and} \; 42$. }
\label{table:ratesavings_mv}
\centering
\begin{tabular}{|c|c|c|c|c|} \hline
{ Data set} & {Proposed: }  &{Proposed: }  &{H.264: 4x4}  & {H.264}  \\ 
{ } & {Stereo }  &{Multiview}  &{}  & {}  \\\hline 
Tsukuba      &14.7 &19.2  &20.3  &77.8 \\ \hline
Plastic      &10.2  &13.2  &11.5  &45.5 \\ \hline
Breakdancers     &5.7  &7.8  &-1.5  &14.7 \\ \hline
Ballet      &4.2  &6.6  &-2.3  &9.2 \\ \hline
\end{tabular}
\end{table}

We then compare the RD performance of our multi-view joint decoding algorithm to a state-of-the-art distributed coding scheme (DSC) based on the DISCOVER \cite{discover}. The DSC experiments are carried out in the following settings. In the Tsukuba dataset, we consider four views, namely left, right, top and bottom images as the key frames, and the center view is considered as the Wyner-Ziv frame. At the decoder, we generate a side information by fusing two side information images that are generated based on motion compensated interpolation: (i) from the left and right decoded views; and (ii) from the top and bottom decoded views. This fusion step is implemented using the algorithm proposed in \cite{thomas_fusion}. 
For the other datasets, we consider the two extreme views as the key frames and the center view is considered as the Wyner-Ziv frame. In this scenario, a side information image is generated based on motion compensated interpolation from the decoded key frames. The resulting rate-distortion performance is available in Fig.~\ref{Fig:rd_mv} and Fig.~\ref{Fig:rd_mv2} (denoted as \emph{DISCOVER}). Comparing the performance of the proposed scheme (denoted as \emph{Proposed: Multiview}) and the DISCOVER scheme, we show that our scheme outperforms the distributed coding solution. Note that this is the case even in the Tsukuba dataset, where four images are fused together to estimate the best possible side information. Furthermore, we can see that the DSC scheme based on DISCOVER actually performs worse (expect for the Tsukuba dataset) than the H.264 intra scheme where all the images are decoded independently. This is mainly due to the poor quality of the side information image generated based on motion compensated interpolation. In other words, the linear motion assumption is not an ideal model for capturing the correlation between images captured in multi-view camera networks. Finally, it is interesting to note that our joint decoding framework does not require a Slepian-Wolf encoder nor any feedback channel, while the DISCOVER coding scheme requires a feedback channel to ensure successful decoding; this comes at the price of high latency due to multiple requests from the decoder \cite{DVC_overview}. 

For the sake of completeness, we finally compare the performance of our scheme with respect to the joint encoding framework based on H.264 with an $\mbox{IPP}$ coding structure. More precisely, we consider one of the views as the I-frame (this is the views center, 0, 0 and 3 for the different datasets, respectively.), and the remaining views are encoded as P-frames. We perform the joint encoding experiments where the motion compensation is carried out in both variable and fixed block size of $4\times 4$. The resulting rate-distortion performance is available in Fig.~\ref{Fig:rd_mv} and Fig.~\ref{Fig:rd_mv2}. The corresponding rate savings with respect to the H.264 intra are available in columns 4 and 5 of Table \ref{table:ratesavings_mv}. From the plots (see Figs.~\ref{Fig:rd_mv} and ~\ref{Fig:rd_mv2}) and from Table \ref{table:ratesavings_mv}, it is clear that our proposed multi-view reconstruction scheme competes and sometimes beats the performance of H.264 4$\times$4 scheme at low bit rates; this is consistent with the tendencies we have observed in the stereo experiments. However, at high bit rates our scheme performs worse than the H.264 joint coding scheme due to suboptimal representation of high frequency components such as edges and textures. Contrarily to H.264, our scheme is however distributed and this reduces the complexity at the encoders, which is attractive for distributed processing applications.

\section{Conclusions} \label{sec:conc}
In this paper, we have proposed a novel rate balanced distributed representation scheme for compressing the correlated multi-view images captured in camera networks. In contrary to the classical DSC schemes, our scheme compresses the images independently without knowing the inter-view statistical relationship between the images at the encoder. We have
proposed a novel joint decoding algorithm based on a constrained optimization problem that permits to improve the reconstruction quality by exploiting the correlation between images. We have shown that our joint reconstruction problem is convex, so that it can be efficiently solved using proximal methods. Simulation results confirm that the proposed joint representation algorithm is successful in improving the reconstruction quality of the compressed images with a balanced quality between the images. Furthermore, we have shown by experiments that the proposed coding scheme outperforms state-of-the-art distributed coding solutions based on disparity learning and on the DISCOVER. Therefore, our scheme certainly provides an effective solution for distributed image processing with low encoding complexity, since it does not require a Slepian-Wolf encoder nor a feedback channel. Our future work focuses on developing robust techniques to estimate more accurate correlation information from highly compressed images. 

\section{Acknowledgments}
The authors would like to thank Dr. Thomas Maugey for many insightful discussions and for his help in the experimental comparisons  with the DISCOVER distributed coding scheme. 

\bibliographystyle{IEEEtran}
\bibliography{thesis_bibN}

\end{document}